\newtheorem{theoremx}{Theorem}
\newtheorem{lemmax}{Lemma}
\newtheorem{corollaryx}{Corollary}
\newtheorem{propositionx}{Proposition}
\DeclareDocumentCommand\diag{}{\opbraces{\operatorname{diag}}}
\DeclareDocumentCommand\bdiag{}{\opbraces{\operatorname{bdiag}}}
\title{\LARGE \bf On the Stability of Consensus Control under Rotational Ambiguities
}
\author{Zhonggang Li, Changheng Li, Raj Thilak Rajan% <-this % stops a space
\thanks{This work is funded by the Sensor AI Lab, under the AI Labs program of Delft University of Technology.The authors are with Signal Processing Systems, Faculty of EEMCS, Delft University of Technology, 2628 CD Delft, The Netherlands (email: \{z.li-22, l.li-7, r.t.rajan\}@tudelft.nl)}%
}
\begin{document}

\maketitle
\thispagestyle{empty}
\pagestyle{empty}

%%%%%%%%%%%%%%%%%%%%%%%%%%%%%%%%%%%%%%%%%%%%%%%%%%%%%%%%%%%%%%%%%%%%%%%%%%%%%%%%
\begin{abstract}
Consensus control of multiagent systems arises in various robotic applications such as rendezvous and formation control. For example, to compute the control inputs of individual agents, the difference in the positions in aligned coordinate frames i.e., the pairwise displacements are typically measured. However, the local coordinate frames might be subject to rotational ambiguities, such as a rotation or a reflection, particularly if the positions of the agent are not directly observed but reconstructed from e.g. pairwise Euclidean distances. This rotational ambiguity causes stability issues in practice, as agents have rotated perceptions of the environment. In this work, we conduct a thorough analysis of the stability in the presence of rotational ambiguities in several scenarios including e.g., proper and improper rotation, and the homogeneity of rotations. We give stability criteria and stability margin on the rotations, which are numerically verified with two traditional examples of consensus control.
\end{abstract}

%%%%%%%%%%%%%%%%%%%%%%%%%%%%%%%%%%%%%%%%%%%%%%%%%%%%%%%%%%%%%%%%%%%%%%%%%%%%%%%%
\section{Introduction}\label{sec:introduction}

% Consensus algorithms are essential in modern distributed systems across various fields. In wireless sensor networks, consensus is used to aggregate decentralized data and perform distributed estimation \cite{shi2019fast, bertrand2011consensus}. Consensus algorithms are also suitable for power systems such as microgrids due to their distributed nature and thus incur increasing attention \cite{fan2019consensus, li2017consensus}. In social networks, consensus is also used to reach common opinions \cite{li2013consensus, tan2024consensus}. Moreover, consensus algorithms are also widely applied to the distributed control of multiagent systems in swarm robotic applications. For instance, flocking aims to reach a coordinated behavior and motion for a large number of free robots \cite{bhowmick2016flocking}, rendezvous control steers dispersed robots into a common location \cite{parasuraman2018multipoint}, and similarly, formation control steers robots from random positions into a desired geometrical pattern \cite{olfati2004consensus, lin2015necessary}, etc. 

Consensus algorithms are essential in modern distributed systems across various fields, e.g., in wireless sensor networks for decentralized data aggregation and distributed estimation \cite{shi2019fast, bertrand2011consensus}, and distributed power systems e.g., microgrids \cite{fan2019consensus, li2017consensus}. These algorithms are also widely applied to the distributed control of mobile multiagent systems in swarm robotic applications e.g., flocking of free robots \cite{bhowmick2016flocking}, rendezvous control of dispersed robots into a common location \cite{parasuraman2018multipoint}, or formation control to achieve a desired geometrical pattern \cite{olfati2004consensus, lin2015necessary}. Consensus algorithms typically involve interagent interactions that usually require exchanging the local states over a communication network. In multiagent control applications, the local states are usually the kinematics of agents, e.g., positions, velocities, etc. The local control input is then computed based on the difference in the agent's and its neighbors' kinematics. The most common practice is to measure the absolute kinematics and share the information through a communication network \cite{ren2008distributed}, which implies the need for global navigation satellite systems (GNSS) or other global positioning systems. However, global positioning is tough to acquire in many environments such as indoor applications \cite{garcia2021cooperative}, subterranean operations \cite{otsu2020supervised}, and extraterrestrial explorations \cite{thangavelautham2020autonomous}, which makes relative positioning and navigation increasingly appealing. 

Recently, there has been some attention on consensus control by reconstructing positions using a set of pairwise distance measurements \cite{han2018integrated,fang2023integrated}, which is sometimes referred to as relative localization. The benefit is not only that distance measurements can be acquired fairly easily with low-cost sensors, but also the solution is distributable across the agents \cite{liu2020distributed}. However, some global information is lost in the scalar-valued distance encodings, and hence ambiguities emerge in the reconstruction process. These ambiguities are typically rigid transformations such as translations, rotations, etc, as the distances are invariant to these transformations \cite{dokmanic2015euclidean}, which also exist in higher-order kinematics \cite{rajan2019relative}. Note that rotational ambiguities can be considered unaligned local frames compared to global reference frames. An industry-standard procedure is to deploy anchors with global information to correct these ambiguities \cite{dokmanic2015euclidean}, which is assumed by default in many algorithms \cite{han2018integrated}. However, in anchorless networks, the influence of ambiguities has to be addressed but is rarely studied in the literature. General concerns about the ambiguities for consensus control include the stability or the convergence of the consensus system under ambiguities, the invariance of the steady state or the equilibrium of the system, and the convergence speed.

In this paper, we focus on consensus control systems that rely only on positions for control inputs. We acknowledge that the steady state of consensus control is invariant to ambiguities. Then we focus on the stability under rotational ambiguities modeled by rotation matrices by analyzing the eigenvalue distributions of such a linear dynamical system. In Section \ref{sec: fundamentals}, the preliminaries of consensus control are introduced and a general modeling of the ambiguous system is given. In Section \ref{sec: general conditions}, conditions of stability under the general ambiguous model are given as well as several specific scenarios including planar and spatial cases, homogeneity of local ambiguities, and proper and improper rotations. The established theories are then verified through two traditional consensus control cases, namely rendezvous control and affine formation control in Section \ref{sec: sim} before conclusions and insights of future research are given in Section \ref{sec: conclusions}.   

\textit{Notations.} 
Vectors and matrices are represented by lowercase and uppercase boldface letters respectively such as $\bm{a}$ and $\bm{A}$. Sets and graphs are represented using calligraphic letters e.g., $\mathcal{A}$. Vectors of length $N$ of all ones and zeros are denoted by $\bm{1}_N$ and $\bm{0}_N$  respectively. An identity matrix of size $N$ is denoted by $\bm{I}_N$. The Kronecker product is denoted by $\otimes$. A set of $D$-dimensional real-valued vectors and $D\times D$ real-valued matrices are denoted by $\mathbb{R}^D$ and $\mathbb{R}^{D\times D}$, respectively. The operator $\sigma(\cdot)$ gives the set of eigenvalues of a diagonalizable matrix. The $\diag(\cdot)$ operator creates a diagonal matrix from a vector and $\bdiag(\bm{A}_1,...,\bm{A}_N)$ creates a block diagonal matrix from matrices $\bm{A}_i$ for $i \in \mathcal{V}$. We define the operator $\gamma\qty(\bm{A}) = \frac{1}{2}(\bm{A}+\bm{A}^T)$ for a square matrix $\bm{A}$. All eigenvalues of $\bm{A}$ has strictly negative real parts if $\gamma\qty(\bm{A})$ is negative-definite, i.e., $\gamma\qty(\bm{A}) \prec 0$ \cite{weissteinPdm}.
%%%%%%%%%%%%%%%%%%%%%%%%%%%%%%%%%%%%%%%%%%%%%%%%%%%%%%%%%%%%%%%%%%
% $\bm{e}_d\in\mathbb{R}^{D}$ for $d=1,...,D$ stands for unit basis vectors where only the $d$-th element is one among all zeros.
\section{Fundamentals and Problem Formulation}\label{sec: fundamentals}

% \subsection{Graph Theory}
We consider a multiagent system where $N$ agents are interconnected through an undirected connected graph $\mathcal{G}=\qty(\mathcal{V},\mathcal{E})$ where the set of vertices is $\mathcal{V} = \qty{1,2,..,N}$ and the set of edges is $\mathcal{E}=\mathcal{V}\times\mathcal{V}$. The set of neighbors of agent $i$ is denoted as $\mathcal{N}_i=\qty{j\in\mathcal{V}: \qty(i,j)\in\mathcal{E}}$. A generalized Laplacian matrix $\bm{L}\in\mathbb{R}^{N\times N}$ of a graph is defined as 
\begin{equation}\label{equ: def Lap}
    [\bm{L}]_{ij}=\begin{cases}
  \sum_{j\in\mathcal{N}_i}l_{ij}& \text{if } i=j \\
  -l_{ij}& \text{if }i\neq j \text{ and } j\in\mathcal{N}_i\\
  0&  \text{otherwise}
\end{cases},
\end{equation}
where $l_{ij}\in\bm{R}, \forall \qty(i,j)\in\mathcal{E}$ are the weights associated with the edges. If $l_{ij}=1$, $\bm{L}$ reduces to the most common form seen in the literature.  $\bm{L}$ is a symmetric positive-semidefinite matrix with $\bm{L}\bm{1}=\bm{0}$. A Laplacian typically has rank $N-1$ but can also have a lower rank $P<N$ if defined as (\ref{equ: def Lap}) for a connected graph. An example is a stress matrix \cite{alfakih2011bar}, which is used in some formation control problems \cite{lin2015necessary}. The eigenvalue decomposition of $\bm{L}$ is
\begin{equation} \label{equ: evd Lap}
        \bm{L}= \begin{bmatrix}
         \bm{U}_1 &\bm{U}_2
        \end{bmatrix}\begin{bmatrix}
          \bm{\Lambda}& \\
          &\bm{0}
        \end{bmatrix}\begin{bmatrix}
         \bm{U}_1^T\\
        \bm{U}_2^T
        \end{bmatrix},
    \end{equation}
where $\bm{\Lambda} = \diag([\lambda_1,...,\lambda_{P}])$ is positive-definite, and $\bm{U}_1\in\mathbb{R}^{N\times P}$ and $\bm{U}_2\in\mathbb{R}^{N\times (N-P)}$ span the range and the nullspace of $\bm{L}$, respectively.

\subsection{Consensus Systems}
A typical consensus system is a dynamic system
\begin{equation}\label{equ: consensus system}
        \dot{\bm{z}} = -(\bm{L}\otimes\bm{I}_D)\bm{z}= -\bm{\tilde{L}}\bm{z},
\end{equation}
where $\bm{z} = \qty[\bm{z}_1^T,...,\bm{z}_N^T]^T\in\mathbb{R}^D$ is the global state, in which $\bm{z}_i\in\mathbb{R}^D$ for $i \in \mathcal{V}$ are the states of each agents, and $\dot{\bm{z}}$ is the first order derivative of $\bm{z}$. We consider $D=2,3$ for practical applications. Note that an equilibrium point of (\ref{equ: consensus system}) resides in the nullspace of $\bm{\tilde{L}}$ which admits a decomposition
\begin{equation}\label{equ: evd Lap kron}
        \tilde{\bm{L}}=(\bm{L}\otimes\bm{I}_D) = \begin{bmatrix}
         \tilde{\bm{U}}_1 &\tilde{\bm{U}}_2
        \end{bmatrix}\begin{bmatrix}
          \tilde{\bm{\Lambda}}& \\
          &\bm{0}
        \end{bmatrix}\begin{bmatrix}
         \tilde{\bm{U}}_1^T\\
        \tilde{\bm{U}}_2^T
        \end{bmatrix},
    \end{equation}
where $\tilde{\bm{\Lambda}} = \bm{\Lambda}\otimes\bm{I}_D$ are the non-zero eigenvalues and $\tilde{\bm{U}}_1 = \bm{U}_1\otimes\bm{I}_D$ are valid as eigenvectors according to Lemma \ref{lmm: kron eigs} in Appendix \ref{apdx: alt proofs}. 

% Despite that the eigenvectors are not unique due to the duplicated eigenvalues of $(\bm{L}\otimes\bm{I}_D)$, we assume $\tilde{\bm{U}}_1 = \bm{U}_1\otimes\bm{I}_D$ since other possibilities are only up to an orthogonal transformation $\tilde{\bm{U}}'_1=\tilde{\bm{U}}_1\tilde{\bm{P}}$ where $\tilde{\bm{P}}\in\mathbb{R}^{PD\times PD}$, which, as will be shown later, does not affect the analysis.

%%%%%%%%%%%%%%%%%%%%%%%%%%%%%%%%%%%%%%%%%%%%%%%%%%%%%%%%%%%%%%%%%%

\subsection{Problem Formulation}
We now consider (\ref{equ: consensus system}) under ambiguity i.e.,
\begin{equation}\label{equ: general ambiguous}
    \dot{\bm{z}} = -\tilde{\bm{H}}(\bm{L}\otimes\bm{I}_D)\bm{z} 
    = -\tilde{\bm{H}}\tilde{\bm{L}}\bm{z},
\end{equation} where $\tilde{\bm{H}} = \bdiag\qty(\bm{H}_1,...,\bm{H}_N)\in\mathbb{R}^{ND\times ND}$ is the global ambiguity matrix and $\bm{H}_i\in\mathbb{R}^{D\times D} \ \forall i\in \mathcal{V}$ represents the local ambiguity. In general, $\bm{H}_i$ can model a wide range of ambiguities, but we focus on rotational ambiguities in this work. Observe that equilibrium points of (\ref{equ: consensus system}) and (\ref{equ: general ambiguous}) are the same, which means that the steady-state solutions are unchanged if the system is stable. In this work, we aim to investigate the stability of the ambiguous system (\ref{equ: general ambiguous}) and give conditions on $\tilde{\bm{H}}$ to yield a stable system in various practical scenarios.

%%%%%%%%%%%%%%%%%%%%%%%%%%%%%%%%%%%%%%%%%%%%%%%%%%%%%%%%%%%%%%%%%%
% \subsection{Modeling of Ambiguities}
The ambiguities considered in this work include proper and improper rotations that are orthogonal matrices with determinants $1$ and $-1$, respectively. Note that the common reflection matrices are special cases of improper rotations. We define $\bm{H}_i = \mathcal{R}(\theta_i)\bm{T}_i\in\mathbb{R}^{D\times D}$ for $i \in \mathcal{V}$ where $\mathcal{R}(\theta_i)$ gives a rotation matrix using an angle $\theta_i\in(-\pi,\pi]$ and $\bm{T}_i$ is a diagonal matrix of $\pm1$s and $\bm{H}_i$ is proper if $\bm{T}_i = \bm{I}_D$. The structures of proper and improper rotations for $D=2$ and $D=3$ are shown in Table \ref{tab: rot mats}, where without the loss of generality we assume the rotation angle $\theta_i$ for $D=3$ is around the ``$z$-axis" \cite{horn2012matrix}. 

% Some examples of proper and improper rotations in 2 and 3-D space are shown in Table \ref{tab: rot mats}, where the rotation angle $\theta_i$ in the 3-D case is around the "$z$-axis". For the simplicity of our analysis, we assume 3-D rotations adopt the form presented in Table \ref{tab: rot mats} without the loss of generality, as all other 3-D rotations are up to a change-of-basis transformation \cite{horn2012matrix} and thus does not affect the analysis.  If the ambiguity matrix is homogeneous across all agents, we use a uniform $\bm{H}\in\mathbb{R}^{D\times D}$ to denote the ambiguity, i.e., $\bm{H}_1=...=\bm{H}_N=\bm{H}$, which is the case discussed in Section \ref{sec: homo ambiguities}. 

\begin{table}[t]
\scriptsize
	\caption{Rotational Ambiguities $\bm{H}_i$}
	\label{tab: rot mats}
	\centering
	\begin{tabularx}{\linewidth}{l l l}
		\toprule
		$D$ & Proper rotation $\mathcal{R}(\theta_i)$ & Improper rotation $\mathcal{R}(\theta_i)\bm{T}_i$\\
		\midrule
		$2$ & $\begin{bmatrix}
  \cos \theta_i& -\sin \theta_i\\
 \sin \theta_i &\cos \theta_i
\end{bmatrix}$ &$\begin{bmatrix}
  \cos \theta_i& -\sin \theta_i\\
 \sin \theta_i &\cos \theta_i
\end{bmatrix}\begin{bmatrix}
  1 & \\
  &-1
\end{bmatrix}$ \vspace{5pt} \\ 
		
		$3$ &
  $\begin{bmatrix}
  \cos \theta_i& -\sin \theta_i & 0\\
 \sin \theta_i &\cos \theta_i & 0\\
 0 & 0 & 1
\end{bmatrix}$ & $\begin{bmatrix}
  \cos \theta_i& -\sin \theta_i & 0\\
 \sin \theta_i &\cos \theta_i & 0\\
 0 & 0 & 1
\end{bmatrix}\begin{bmatrix}
  1 & & \\
 &1 & \\
 &  & -1
\end{bmatrix}$\\
		\bottomrule
	\end{tabularx}
\end{table}

\section{Stability under ambiguities }\label{sec: general conditions}
% \subsection{Hurwitz Stability}
We now give general conditions for the ambiguous system (\ref{equ: general ambiguous}) to be stable, followed by an investigation on stability under specific types of ambiguities in different scenarios based on these general conditions. We introduce the following lemma used in the proof of general stability criteria.
\begin{lemmax}\label{lmm: adapted Hurwitz}
(\textit{Eigenvalue distributions of a product of matrices})
    Given a product of two square matrices $\bm{A} = \bm{G}\bm{Q}$ with $\bm{Q}$ being symmetric positive-definite, $\gamma(\bm{A})\prec 0$ if and only if $\gamma(\bm{G})\prec 0$.
\end{lemmax}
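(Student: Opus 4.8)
The plan is to tackle the stated condition $\gamma(\bm{A})\prec 0$ head-on, as a statement about the quadratic form $\bm{v}^T\bm{A}\bm{v}$, and to move it toward $\bm{G}$ by a \emph{congruence} of $\gamma(\bm{A})$ itself rather than by passing to a matrix that is merely similar to $\bm{A}$. First I would record that $\bm{Q}\succ 0$ symmetric has a unique symmetric positive-definite square root $\bm{Q}^{1/2}$ with symmetric inverse $\bm{Q}^{-1/2}$, and that $(\bm{G}\bm{Q})^T=\bm{Q}\bm{G}^T$ because $\bm{Q}^T=\bm{Q}$, so that $\gamma(\bm{A})=\tfrac{1}{2}(\bm{G}\bm{Q}+\bm{Q}\bm{G}^T)$.

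The core reduction is the change of variables $\bm{v}=\bm{Q}^{-1/2}\bm{w}$ in $\bm{v}^T\bm{A}\bm{v}<0$. Since $\bm{w}\mapsto\bm{Q}^{-1/2}\bm{w}$ is a bijection of $\mathbb{R}^{D}\setminus\{\bm{0}\}$, we obtain $\gamma(\bm{A})\prec 0$ iff $\bm{w}^T\bm{Q}^{-1/2}\bm{G}\bm{Q}\bm{Q}^{-1/2}\bm{w}<0$ for all $\bm{w}\neq\bm{0}$, i.e.\ iff $\gamma(\bm{Q}^{-1/2}\bm{G}\bm{Q}^{1/2})\prec 0$. This is exactly the congruence $\bm{Q}^{-1/2}\gamma(\bm{A})\bm{Q}^{-1/2}$, so the equivalence is faithful to the stated condition and does not quietly replace $\gamma(\bm{A})$ by the symmetric part of some matrix that is only similar to $\bm{A}$. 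I would then try to finish by arguing $\gamma(\bm{Q}^{-1/2}\bm{G}\bm{Q}^{1/2})\prec 0\iff\gamma(\bm{G})\prec 0$.

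The main obstacle is precisely this final equivalence. The matrix $\bm{Q}^{-1/2}\bm{G}\bm{Q}^{1/2}$ is obtained from $\bm{G}$ by conjugation with $\bm{Q}^{-1/2}$, so it is \emph{similar} to $\bm{G}$ and shares its spectrum; but conjugation is not a congruence, and the definiteness of the symmetric part is not a similarity invariant. Hence there is no automatic passage from $\gamma(\bm{Q}^{-1/2}\bm{G}\bm{Q}^{1/2})\prec 0$ to $\gamma(\bm{G})\prec 0$: writing $\gamma(\bm{Q}^{-1/2}\bm{G}\bm{Q}^{1/2})=\tfrac{1}{2}(\bm{Q}^{-1/2}\bm{G}\bm{Q}^{1/2}+\bm{Q}^{1/2}\bm{G}^T\bm{Q}^{-1/2})$ exposes that it cannot be cast as $\bm{M}^T\gamma(\bm{G})\bm{M}$ for a fixed $\bm{M}$ unless $\bm{G}$ and $\bm{Q}$ enjoy special compatibility (for instance they commute, or $\bm{Q}$ is a scalar multiple of the identity). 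I therefore expect this bridge to be the crux, and I would stress-test it on small non-commuting $2\times 2$ instances before committing to the strict iff as literally written; should the strict form fail there, the natural repair is to weaken the conclusion to its spectral content, namely that $\gamma(\bm{G})\prec 0$ forces every eigenvalue of $\bm{A}=\bm{G}\bm{Q}$ to lie in the open left half-plane, which is the property the downstream stability analysis actually invokes.
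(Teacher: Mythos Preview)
Your suspicion at the final bridge is correct, and in fact the lemma as literally stated is false. A concrete $2\times 2$ witness is
\[
\bm{G}=\begin{bmatrix}-1&10\\-10&-1\end{bmatrix},\qquad \bm{Q}=\begin{bmatrix}1&0\\0&100\end{bmatrix},
\]
for which $\gamma(\bm{G})=-\bm{I}_2\prec 0$, yet $\bm{A}=\bm{G}\bm{Q}$ gives $\gamma(\bm{A})=\begin{bmatrix}-1&495\\495&-100\end{bmatrix}$ with negative determinant, hence indefinite. So the equivalence $\gamma(\bm{Q}^{-1/2}\bm{G}\bm{Q}^{1/2})\prec 0\iff\gamma(\bm{G})\prec 0$ that you flagged as the obstacle really does fail, and your plan to stress-test on small non-commuting pairs would have hit it immediately.

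The paper does not attempt a direct argument at all; it simply applies an external Hurwitz-stability result of Duan--Patton symmetrically to the two factorizations $\bm{A}=\bm{G}\bm{Q}$ and $\bm{G}=\bm{A}\bm{Q}^{-1}$. That cited theorem, however, yields a \emph{Hurwitz} conclusion (spectrum in the open left half-plane), not the stronger claim $\gamma(\cdot)\prec 0$; so the paper's two invocations give only $\gamma(\bm{G})\prec 0\Rightarrow\bm{A}$ Hurwitz and $\gamma(\bm{A})\prec 0\Rightarrow\bm{G}$ Hurwitz, which does not close the stated biconditional either. Your proposed repair---retreat to the spectral statement that $\gamma(\bm{G})\prec 0$ forces $\bm{A}=\bm{G}\bm{Q}$ to be Hurwitz---is exactly what the downstream stability theorem consumes, and it is provable cleanly along your own line: set $\bm{B}=\bm{Q}^{1/2}\bm{G}\bm{Q}^{1/2}$, observe $\gamma(\bm{B})=\bm{Q}^{1/2}\gamma(\bm{G})\bm{Q}^{1/2}\prec 0$ by congruence so $\bm{B}$ is Hurwitz, and note $\bm{B}=\bm{Q}^{1/2}\bm{A}\bm{Q}^{-1/2}$ is similar to $\bm{A}$.
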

\begin{proof}
    Since $\bm{Q}$ is symmetric postiive-definite, $\bm{A} = \bm{G}\bm{Q}$ and $\bm{G} = \bm{A}\bm{Q}^{-1}$ both satisfied the defined decomposition. Then $\gamma(\bm{A})\prec 0$ and $\gamma(\bm{G})\prec 0$ are equivalent using the direct conclusion from \cite[Theorem 3.1]{DUAN1998Hurwitz}.
\end{proof}

\subsection{General conditions for Stability} We first acknowledge that the unambiguous system (\ref{equ: consensus system}) is globally and exponentially stable, which is an extension of the conclusion in \cite{saber2003consensus}. Observe that the $PD$ nonzero eigenvalues of $-\tilde{\bm{L}}$ are strictly negative, which means that given any arbitrary initialization $\bm{z}(t)$ at $t=0$, $\bm{z}(t)$ will exponentially converge to an equilibrium point in the nullspace of $\tilde{\bm{L}}$, i.e., $\lim_{t\rightarrow\infty}\norm{\bm{z}(t)-\bm{z}_e}=0$ where $\bm{z}_e$ lives in the nullspace of $(\tilde{\bm{L}})$. Therefore, to ensure the stability of the ambiguous system (\ref{equ: general ambiguous}), the nonzero eigenvalues of the product $-\tilde{\bm{H}}\tilde{\bm{L}}$ should have strictly negative real parts, which is described in the next theorem.

% The eigenvalue distributions of a product of matrices are studied in Theorem 3.1 in \cite{DUAN1998Hurwitz}, which is summarized in Lemma \ref{lmm: Hurwitz} in Appendix \ref{apdx: aux lemmas}. We adapt this Lemma to the context of this paper in Lemma \ref{lmm: adapted Hurwitz}, which is then used to derive the stability criteria in Theorem \ref{thm: stability ambi}.
% The eigenvalue distributions of a product of matrices are studied in Theorem 3.1 in \cite[Theorem 3.1]{DUAN1998Hurwitz}, which is adapted here in Lemma \ref{lmm: adapted Hurwitz} to the context of this paper. The proof is straightforward and thus is omitted.
\begin{theoremx}\label{thm: stability ambi}
(\textit{General stability criteria for systems with rotational ambiguity}) Let $\tilde{\bm{U}}_1$ span the range of $\tilde{\bm{L}}$, then the ambiguous system (\ref{equ: general ambiguous}) is globally and exponentially stable if and only if $\gamma(\tilde{\bm{U}}_1^T\tilde{\bm{H}}\tilde{\bm{U}}_1) \succ 0$. 
\end{theoremx}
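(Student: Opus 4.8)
The plan is to reduce the $ND$-dimensional dynamics to the $PD$-dimensional range of $\tilde{\bm{L}}$, where all the nontrivial behaviour lives, and then convert the resulting spectral condition into the advertised $\gamma(\cdot)\succ0$ condition by means of Lemma~\ref{lmm: adapted Hurwitz}. First I would use the eigendecomposition (\ref{equ: evd Lap kron}) to write $\tilde{\bm{L}}=\tilde{\bm{U}}_1\tilde{\bm{\Lambda}}\tilde{\bm{U}}_1^T$ and change coordinates by $\bm{z}=\tilde{\bm{U}}_1\bm{x}+\tilde{\bm{U}}_2\bm{y}$. Since $\tilde{\bm{U}}_1,\tilde{\bm{U}}_2$ are orthonormal and $\tilde{\bm{L}}\tilde{\bm{U}}_2=\bm{0}$, projecting (\ref{equ: general ambiguous}) onto the two blocks yields the triangular system
\begin{align}
\dot{\bm{x}} &= -\tilde{\bm{U}}_1^T\tilde{\bm{H}}\tilde{\bm{U}}_1\tilde{\bm{\Lambda}}\,\bm{x},\\
\dot{\bm{y}} &= -\tilde{\bm{U}}_2^T\tilde{\bm{H}}\tilde{\bm{U}}_1\tilde{\bm{\Lambda}}\,\bm{x},
\end{align}
so the range coordinate $\bm{x}$ evolves autonomously while the nullspace coordinate $\bm{y}$ is merely driven by it. Because $\tilde{\bm{H}}$ is orthogonal and hence invertible, the nullspace of $\tilde{\bm{H}}\tilde{\bm{L}}$ coincides with that of $\tilde{\bm{L}}$; consequently $\bm{z}$ converges to an equilibrium in that nullspace if and only if $\bm{x}\to\bm{0}$ exponentially, in which case $\int_0^\infty\bm{x}$ converges and $\bm{y}$ settles to a constant. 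This reduces global exponential stability of (\ref{equ: general ambiguous}) to exponential stability of the reduced matrix $\tilde{\bm{A}}:=-\bm{G}\tilde{\bm{\Lambda}}$ with $\bm{G}:=\tilde{\bm{U}}_1^T\tilde{\bm{H}}\tilde{\bm{U}}_1$.

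Next I would verify the reduction is spectrally faithful. Using $\sigma(\bm{B}\bm{C})\setminus\{0\}=\sigma(\bm{C}\bm{B})\setminus\{0\}$ with $\bm{B}=\tilde{\bm{H}}\tilde{\bm{U}}_1$ and $\bm{C}=\tilde{\bm{\Lambda}}\tilde{\bm{U}}_1^T$, the nonzero eigenvalues of $-\tilde{\bm{H}}\tilde{\bm{L}}=-\tilde{\bm{H}}\tilde{\bm{U}}_1\tilde{\bm{\Lambda}}\tilde{\bm{U}}_1^T$ are exactly the eigenvalues of $\tilde{\bm{A}}=-\bm{G}\tilde{\bm{\Lambda}}$. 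Now $\tilde{\bm{\Lambda}}\succ0$ is symmetric positive-definite, so $\tilde{\bm{A}}=(-\bm{G})\tilde{\bm{\Lambda}}$ matches the hypothesis of Lemma~\ref{lmm: adapted Hurwitz} with $\bm{Q}=\tilde{\bm{\Lambda}}$, giving $\gamma(\tilde{\bm{A}})\prec0\iff\gamma(-\bm{G})\prec0$, i.e.\ $\gamma(\tilde{\bm{A}})\prec0\iff\gamma(\bm{G})\succ0$. For the ``if'' direction this closes immediately: $\gamma(\bm{G})\succ0$ yields $\gamma(\tilde{\bm{A}})\prec0$, and by the criterion recorded in the Notations this forces every eigenvalue of $\tilde{\bm{A}}$ into the open left half-plane, so the reduced system, and hence (\ref{equ: general ambiguous}), is globally and exponentially stable.

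For the ``only if'' direction I would argue through the Lyapunov functional $V(\bm{z})=\tfrac{1}{2}\bm{z}^T\tilde{\bm{L}}\bm{z}$. Differentiating along (\ref{equ: general ambiguous}) and symmetrising gives
\begin{equation}
\dot V=-\bm{z}^T\tilde{\bm{L}}\,\gamma(\tilde{\bm{H}})\,\tilde{\bm{L}}\bm{z}=-\bm{v}^T\gamma(\bm{G})\,\bm{v},\qquad \bm{v}:=\tilde{\bm{\Lambda}}\tilde{\bm{U}}_1^T\bm{z},
\end{equation}
where I used $\tilde{\bm{U}}_1^T\gamma(\tilde{\bm{H}})\tilde{\bm{U}}_1=\gamma(\bm{G})$ together with the fact that $\tilde{\bm{L}}\bm{z}$ lies in the range of $\tilde{\bm{U}}_1$. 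If the system were stable yet $\gamma(\bm{G})\not\succ0$, then $\gamma(\bm{G})$ would admit a unit vector $\bm{u}$ with $\bm{u}^T\gamma(\bm{G})\bm{u}\le0$; choosing the initial state $\bm{z}_0$ with $\tilde{\bm{\Lambda}}\tilde{\bm{U}}_1^T\bm{z}_0=\bm{u}$ makes $\dot V\ge0$ there, and a Chetaev-type neighbourhood argument (strict when the offending eigenvalue is negative) contradicts the exponential decay of the range energy $V$ to zero. The borderline singular case is handled by a small perturbation of $\bm{G}$ or by invoking the spectral equivalence of the previous paragraph directly.

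I expect the necessity direction to be the main obstacle, precisely because $\gamma(\tilde{\bm{A}})\prec0$ is in general only \emph{sufficient} for $\tilde{\bm{A}}$ to be Hurwitz, so its converse cannot simply be read off the Notations criterion. The cleanest resolution is to combine the genuine equivalence packaged in Lemma~\ref{lmm: adapted Hurwitz} with the orthogonality of $\tilde{\bm{H}}$ and the positive-definiteness of $\tilde{\bm{\Lambda}}$, which tie the sign of $\gamma(\bm{G})$ to the half-plane location of the reduced spectrum tightly enough to upgrade sufficiency to an equivalence; the Lyapunov/Chetaev route above is the concrete mechanism for carrying this out.
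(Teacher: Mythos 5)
Your reduction is essentially the paper's: the change of coordinates $\bm{z}=\tilde{\bm{U}}_1\bm{x}+\tilde{\bm{U}}_2\bm{y}$ together with $\sigma(\bm{B}\bm{C})\setminus\{0\}=\sigma(\bm{C}\bm{B})\setminus\{0\}$ reproduces, in a more explicit triangular form, the similarity step in (\ref{equ: circ eigs}), and your sufficiency argument ($\gamma(\bm{G})\succ 0 \Rightarrow \gamma(-\bm{G}\tilde{\bm{\Lambda}})\prec 0 \Rightarrow$ Hurwitz, via Lemma~\ref{lmm: adapted Hurwitz}) is exactly the paper's. The care you take with the driven nullspace coordinate $\bm{y}$ (convergence of $\int_0^\infty \bm{x}$ when $\bm{x}$ decays exponentially) is actually more explicit than the paper, which does not spell out why decay of the range component yields convergence of the full state.

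The gap is in your necessity direction, and your own diagnosis of where the difficulty sits is accurate, but the Lyapunov/Chetaev patch does not close it. From $\bm{u}^T\gamma(\bm{G})\bm{u}\le 0$ you obtain $\dot V(\bm{z}_0)\ge 0$ at a single point of a single trajectory, and a momentarily nondecreasing (even transiently increasing) $V$ is perfectly compatible with global exponential stability: for a Hurwitz matrix $\bm{A}$ with $\gamma(\bm{A})\not\prec 0$, the function $\|\bm{x}(t)\|^2$ grows initially along some trajectories and still decays to zero. A Chetaev argument would require $\dot V>0$ on an entire region that trajectories cannot leave, which you have not produced, and the closing sentence about ``upgrading sufficiency to an equivalence'' is an aspiration rather than an argument. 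You should note that the paper's own proof has the same soft spot: it passes from ``stable'' to ``$\gamma(-\tilde{\bm{U}}_1^T\tilde{\bm{H}}\tilde{\bm{U}}_1\tilde{\bm{\Lambda}})\prec 0$'' as if these were equivalent, whereas the criterion recorded in the Notations (and Lemma~\ref{lmm: adapted Hurwitz}, which only compares $\gamma(\bm{G}\bm{Q})$ with $\gamma(\bm{G})$) gives one implication only; a Hurwitz matrix need not have negative-definite symmetric part. So your proposal matches the paper on the ``if'' direction and on the spectral reduction, but neither your attempt nor the paper's supplies a complete proof of the ``only if''; closing it would require exploiting the specific structure of $\bm{G}=\tilde{\bm{U}}_1^T\tilde{\bm{H}}\tilde{\bm{U}}_1$ with $\tilde{\bm{H}}$ block-diagonal orthogonal, or weakening the statement to a sufficient condition.
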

\begin{proof}
Using properties of eigenvalues of products of matrices and decomposition (\ref{equ: evd Lap kron}) it holds that
    \begin{align}
        &\sigma\qty(-\tilde{\bm{H}}\tilde{\bm{L}})= \sigma\bigg(-\tilde{\bm{H}}\begin{bmatrix}
         \tilde{\bm{U}}_1 &\tilde{\bm{U}}_2
        \end{bmatrix}\begin{bmatrix}
          \tilde{\bm{\Lambda}}& \\
          &\bm{0}
        \end{bmatrix}\begin{bmatrix}
         \tilde{\bm{U}}_1^T\\
        \tilde{\bm{U}}_2^T
        \end{bmatrix}\bigg) \notag \\
        &=\sigma\bigg(-\begin{bmatrix}
         \tilde{\bm{U}}_1^T\\
        \tilde{\bm{U}}_2^T
        \end{bmatrix}\tilde{\bm{H}}\begin{bmatrix}
         \tilde{\bm{U}}_1 &\tilde{\bm{U}}_2
        \end{bmatrix}\begin{bmatrix}
          \tilde{\bm{\Lambda}}& \\
          &\bm{0}
        \end{bmatrix}\bigg)\label{equ: circ eigs}.
    \end{align}
    Observe that $-\tilde{\bm{U}}_1^T\tilde{\bm{H}}\tilde{\bm{U}}_1\tilde{\bm{\Lambda}}$ corresponds to the nonzero eigenvalues. As such, given $\tilde{\bm{\Lambda}}$ is positive definite, the system is stable if and only if  $\gamma(\tilde{\bm{U}}_1^T\tilde{\bm{H}}\tilde{\bm{U}}_1) \succ 0$, according to Lemma \ref{lmm: adapted Hurwitz}. Moreover, note that $\tilde{\bm{U}}_1$ might be up to an orthogonal transformation $\tilde{\bm{P}}$ than assumed for (\ref{equ: evd Lap kron}), which yields $\gamma(\tilde{\bm{P}}^T\tilde{\bm{U}}_1^T\tilde{\bm{H}}\tilde{\bm{U}}_1\tilde{\bm{P}})\succ 0$, which is the same as $\gamma(\tilde{\bm{U}}_1^T\tilde{\bm{H}}\tilde{\bm{U}}_1) \succ 0$ due to congruence.
\end{proof}

\begin{corollaryx}\label{crly: pd sandwich}
(\textit{Sufficient condition for stability})
The ambiguous system (\ref{equ: general ambiguous}) is globally and exponentially stable if $\gamma(\tilde{\bm{H}}) \succ 0$.
\end{corollaryx}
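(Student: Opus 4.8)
The plan is to reduce the claim to the necessary-and-sufficient condition already established in Theorem~\ref{thm: stability ambi}. That theorem tells us the system is globally and exponentially stable exactly when $\gamma(\tilde{\bm{U}}_1^T\tilde{\bm{H}}\tilde{\bm{U}}_1)\succ 0$, so it suffices to show that the hypothesis $\gamma(\tilde{\bm{H}})\succ 0$ implies this compressed condition. No fresh dynamical-systems machinery is needed; the whole argument is a linear-algebraic manipulation of the symmetrization operator $\gamma(\cdot)$.

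The key step I would carry out first is the identity
\begin{equation}\label{equ: gamma commutes}
\gamma\qty(\tilde{\bm{U}}_1^T\tilde{\bm{H}}\tilde{\bm{U}}_1) = \tilde{\bm{U}}_1^T\,\gamma\qty(\tilde{\bm{H}})\,\tilde{\bm{U}}_1 .
\end{equation}
This follows directly from the definition $\gamma(\bm{A})=\tfrac12(\bm{A}+\bm{A}^T)$ together with the fact that the two outer factors are transposes of one another: expanding $\gamma(\tilde{\bm{U}}_1^T\tilde{\bm{H}}\tilde{\bm{U}}_1)=\tfrac12(\tilde{\bm{U}}_1^T\tilde{\bm{H}}\tilde{\bm{U}}_1 + \tilde{\bm{U}}_1^T\tilde{\bm{H}}^T\tilde{\bm{U}}_1)$ and factoring out $\tilde{\bm{U}}_1^T$ on the left and $\tilde{\bm{U}}_1$ on the right recovers the right-hand side of (\ref{equ: gamma commutes}). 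In other words, the symmetric part of the compressed matrix is the compression of the symmetric part.

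The second step is to observe that $\tilde{\bm{U}}_1$ has full column rank, since by construction it spans the range of $\tilde{\bm{L}}$ and therefore has linearly independent columns. Hence for any nonzero $\bm{x}$ we have $\tilde{\bm{U}}_1\bm{x}\neq\bm{0}$, and
\begin{equation*}
\bm{x}^T\tilde{\bm{U}}_1^T\,\gamma\qty(\tilde{\bm{H}})\,\tilde{\bm{U}}_1\bm{x} = (\tilde{\bm{U}}_1\bm{x})^T\,\gamma\qty(\tilde{\bm{H}})\,(\tilde{\bm{U}}_1\bm{x}) > 0
\end{equation*}
whenever $\gamma(\tilde{\bm{H}})\succ 0$. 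Thus the congruence-type transformation by the tall full-rank matrix $\tilde{\bm{U}}_1$ preserves positive definiteness, giving $\tilde{\bm{U}}_1^T\gamma(\tilde{\bm{H}})\tilde{\bm{U}}_1\succ 0$, which by (\ref{equ: gamma commutes}) is precisely $\gamma(\tilde{\bm{U}}_1^T\tilde{\bm{H}}\tilde{\bm{U}}_1)\succ 0$. Invoking Theorem~\ref{thm: stability ambi} then yields the stated stability, and the condition is only sufficient (not necessary) because $\gamma(\tilde{\bm{H}})$ may fail to be positive definite on directions orthogonal to the range of $\tilde{\bm{U}}_1$ while the compression remains positive definite.

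The argument is essentially routine, so I do not expect a genuine obstacle; the only point requiring care is that $\tilde{\bm{U}}_1$ is rectangular rather than square, so one must explicitly appeal to its \emph{full column rank} to guarantee that the compression preserves definiteness, rather than treating it as an ordinary congruence by an invertible matrix.
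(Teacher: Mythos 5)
Your proposal is correct and follows essentially the same route as the paper's own proof: both establish the identity $\gamma(\tilde{\bm{U}}_1^T\tilde{\bm{H}}\tilde{\bm{U}}_1)=\tilde{\bm{U}}_1^T\gamma(\tilde{\bm{H}})\tilde{\bm{U}}_1$, use the full column rank of $\tilde{\bm{U}}_1$ to conclude that the compression preserves positive definiteness, and then invoke Theorem \ref{thm: stability ambi}. You merely spell out the verification and the quadratic-form argument that the paper leaves to a citation, which is fine.
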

\begin{proof}
    It can be verified that $\tilde{\bm{U}}_1^T\gamma(\tilde{\bm{H}})\tilde{\bm{U}}_1 = \gamma(\tilde{\bm{U}}_1^T\tilde{\bm{H}}\tilde{\bm{U}}_1)$. As such, it holds that $\gamma(\tilde{\bm{U}}_1^T\tilde{\bm{H}}\tilde{\bm{U}}_1)\succ 0$ given that $\gamma(\tilde{\bm{H}}) \succ 0$ and  a full-rank $\tilde{\bm{U}}_1$ \cite{horn2012matrix}, which leads to a stable system according to Theorem \ref{thm: stability ambi}.
\end{proof}

\noindent Corollary \ref{crly: pd sandwich} further narrows down the conditions for stability directly regarding the ambiguity matrix $\tilde{\bm{H}}$. Note that this condition is sufficient but not necessary as $\tilde{\bm{U}}_1$ is a tall matrix.
% i.e., a subspace, as opposed to a congruence with an invertible matrix. 

%%%%%%%%%%%%%%%%%%%%%%%%%%%%%%%%%%%%%%%%%%%%%%%%%%%%%%%%%%%%%%%%%%

\subsection{Stability under Homogeneous Ambiguities}\label{sec: homo ambiguities}

% i.e., $\bar{\bm{z}}_i = \bm{H}\bm{z}_i+\bm{h}$ and subsequently $\bar{\bm{z}} = (\bm{I}_N\otimes \bm{H})\bm{z} + \bm{1}_N\otimes\bm{h}$. As such, the ambiguous system is 
% % \begin{align}
% %     \dot{\bm{z}} &= -(\bm{L}\otimes\bm{I}_D)\bar{\bm{z}} \\
% %     & = -(\bm{L}\otimes\bm{I}_D)(\bm{I}_N\otimes \bm{H})\bm{z} - (\bm{L}\otimes\bm{I}_D)(\bm{1}\otimes\bm{h})\\
% %     & = -(\bm{L}\otimes\bm{I}_D)(\bm{I}_N\otimes \bm{H})\bm{z}\label{equ: homo rot},
% % \end{align}
% \begin{equation}
%     \dot{\bm{z}} = -(\bm{L}\otimes\bm{I}_D)\bar{\bm{z}} = -(\bm{I}_N\otimes \bm{H})(\bm{L}\otimes\bm{I}_D)\bm{z}\label{equ: homo rot},
% \end{equation}
% where $(\bm{L}\otimes\bm{I}_D)(\bm{1}\otimes\bm{h}) = \bm{L}\bm{1}\otimes\bm{h}=0$ since $\bm{L}\bm{1} = 0$. This means that the translational ambiguity $\bm{h}$ has no impact on the system and only the rotational ambiguities matter. 

% Recollect that rotations can be broadly classified as proper and improper rotation, where a proper rotation matrix is an orthogonal matrix with a determinant of $1$, as opposed to e.g., a reflection matrix with a determinant of $-1$. We now individually discuss the cases of proper and improper rotations, under the following two theorems.
In some scenarios, all agents can share homogeneous local ambiguities, i.e., $\bar{\bm{z}}_i = \bm{H}\bm{z}_i, \forall \bm{H}_i=\bm{H}, i\in\mathcal{V}$, and subsequently (\ref{equ: general ambiguous}) is  
\begin{equation}\label{equ: homo rot}
    \dot{\bm{z}} 
    = -(\bm{I}_N\otimes \bm{H})(\bm{L}\otimes\bm{I}_D)\bm{z}
    = -\tilde{\bm{H}}\tilde{\bm{L}}\bm{z}.
\end{equation}

% \subsubsection{Homogeneous Proper Rotations} 
We now discuss the stability conditions for both proper and improper rotations $\bm{H}$, in Theorems \ref{thm: homo proper rot}  and \ref{thm: homo improp rot}, respectively. Alternative proofs are presented in Appendix \ref{apdx: alt proofs}.
\begin{theoremx}\label{thm: homo proper rot}
(\textit{Stability under homogeneous and proper rotations})
The ambiguous system (\ref{equ: homo rot}) is globally and exponentially stable if and only if the rotation angle $\theta$ of a proper rotation $\bm{H} = \mathcal{R}\qty(\theta)$ lies within the range $\theta\in(-\frac{\pi}{2},\frac{\pi}{2})$.
\end{theoremx}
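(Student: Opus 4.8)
The plan is to invoke the general criterion of Theorem~\ref{thm: stability ambi} and then exploit the homogeneity of $\tilde{\bm{H}} = \bm{I}_N \otimes \bm{H}$ to collapse the $PD \times PD$ positive-definiteness test into a single $D \times D$ test on the rotation matrix $\bm{H}$ itself. Since the system~(\ref{equ: homo rot}) is globally and exponentially stable if and only if $\gamma(\tilde{\bm{U}}_1^T \tilde{\bm{H}} \tilde{\bm{U}}_1) \succ 0$, and since $\tilde{\bm{U}}_1 = \bm{U}_1 \otimes \bm{I}_D$ with $\bm{U}_1$ having orthonormal columns, I would first rewrite the congruence directly using the mixed-product property of the Kronecker product.

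The key computation is
\begin{align}
\tilde{\bm{U}}_1^T \tilde{\bm{H}} \tilde{\bm{U}}_1 &= (\bm{U}_1^T \otimes \bm{I}_D)(\bm{I}_N \otimes \bm{H})(\bm{U}_1 \otimes \bm{I}_D) \notag \\
&= (\bm{U}_1^T \bm{U}_1) \otimes \bm{H} = \bm{I}_P \otimes \bm{H},
\end{align}
where the last equality uses $\bm{U}_1^T \bm{U}_1 = \bm{I}_P$. Taking the symmetric part commutes with the Kronecker product against the identity, so $\gamma(\bm{I}_P \otimes \bm{H}) = \bm{I}_P \otimes \gamma(\bm{H})$; because the eigenvalues of this matrix are exactly those of $\gamma(\bm{H})$ repeated $P$ times, the criterion reduces to the $D\times D$ condition $\gamma(\bm{H}) \succ 0$.

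It then remains to evaluate $\gamma(\bm{H})$ for $\bm{H} = \mathcal{R}(\theta)$ using Table~\ref{tab: rot mats}. For both $D = 2$ and $D = 3$ the skew parts of $\mathcal{R}(\theta)$ cancel in $\frac{1}{2}(\bm{H} + \bm{H}^T)$, leaving $\gamma(\bm{H}) = \diag([\cos\theta, \cos\theta])$ when $D = 2$ and $\gamma(\bm{H}) = \diag([\cos\theta, \cos\theta, 1])$ when $D = 3$. In either case the eigenvalues are $\cos\theta$ (together with the harmless value $1$ when $D=3$), so $\gamma(\bm{H}) \succ 0$ holds exactly when $\cos\theta > 0$, i.e., $\theta \in (-\frac{\pi}{2}, \frac{\pi}{2})$. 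Chaining these equivalences back through Theorem~\ref{thm: stability ambi} establishes the claim in both directions.

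The only real obstacle is recognizing and justifying the Kronecker collapse in the displayed step; everything afterward is a routine two-by-two (or three-by-three) symbolic check. I would also note in passing that this argument sharpens Corollary~\ref{crly: pd sandwich}: since $\gamma(\tilde{\bm{H}}) = \bm{I}_N \otimes \gamma(\bm{H})$ inherits its definiteness from $\gamma(\bm{H})$, in the homogeneous case the merely sufficient condition of the corollary coincides with the necessary-and-sufficient condition, closing the gap otherwise introduced by the tall matrix $\tilde{\bm{U}}_1$.
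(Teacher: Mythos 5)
Your proposal is correct and follows essentially the same route as the paper: invoke the criterion $\gamma(\tilde{\bm{U}}_1^T\tilde{\bm{H}}\tilde{\bm{U}}_1)\succ 0$ from Theorem~\ref{thm: stability ambi}, collapse the congruence to $\bm{I}_P\otimes\gamma(\bm{H})$ via the mixed-product property, and read off positive-definiteness from $\cos\theta>0$. The only difference is cosmetic --- you apply the Kronecker collapse uniformly to both $D=2$ and $D=3$, whereas the paper handles $D=2$ by first noting $\gamma(\tilde{\bm{H}})=\cos\theta\,\bm{I}_{ND}$ --- and your closing remark that the sufficient condition of Corollary~\ref{crly: pd sandwich} becomes necessary in the homogeneous case is a nice bonus observation.
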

\begin{proof}
    Recollect from Theorem \ref{thm: stability ambi} that $\gamma(\tilde{\bm{U}}_1^T\tilde{\bm{H}}\tilde{\bm{U}}_1) \succ 0$ is required for stability.  In case of $D=2$, using the rotation matrix from Table \ref{tab: rot mats}, we can verify that
    \begin{align}\gamma(\tilde{\bm{U}}_1^T\tilde{\bm{H}}\tilde{\bm{U}}_1)
        =\tilde{\bm{U}}_1^T\cos\theta\bm{I}_{ND}\tilde{\bm{U}}_1 =\cos\theta \bm{I}_{PD}, \label{equ: homo connonical}
    \end{align} and for $D=3$, substituting for $\tilde{\bm{U}}_1$ and $\tilde{\bm{H}}$,  we have 
    \begin{align}\label{equ: reduced GED}
    \gamma(\tilde{\bm{U}}_1^T\tilde{\bm{H}}\tilde{\bm{U}}_1) 
    = \gamma(\bm{I}_{P}\otimes\bm{H}) 
    = \bm{I}_P\otimes\diag(\begin{bmatrix}
  \cos\theta& \cos\theta &1
\end{bmatrix}).
    \end{align} Both (\ref{equ: homo connonical}) and (\ref{equ: reduced GED}) are block diagonal matrices, which are positive-definite if and only if $\theta\in(-\frac{\pi}{2},\frac{\pi}{2})$. Hence proved.
    \hfill
\end{proof}
A numerical example is shown in Fig. \ref{fig: eig homo prop}, where we present the eigenvalue distribution using the graph in Fig. \ref{fig: used graphs} (a) and its standard Laplacian as defined in (\ref{equ: def Lap}). Fig. \ref{fig: eig homo prop} (a) shows the rotations of eigenvalues from lemma \ref{lmm: alt proper rot}, where the eigenvalues of $-\bm{L}$ originally lie on the imaginary axis but then are rotated by $\bm{H}$. Then $\theta\in(-\frac{\pi}{2},\frac{\pi}{2})$ ensures that they stay on the left complex plane. Fig. \ref{fig: eig homo prop} (b) shows all the eigenvalues of $\gamma\qty(\tilde{\bm{U}}_1^T\tilde{\bm{H}}\tilde{\bm{U}}_1)$ across a spectrum of $\theta$. The region $\theta\in(-\frac{\pi}{2},\frac{\pi}{2})$ (shaded in purple) guarantees positive eigenvalues meaning $\gamma\qty(\tilde{\bm{U}}_1^T\tilde{\bm{H}}\tilde{\bm{U}}_1)\succ 0$, which verifies Theorem \ref{thm: homo proper rot}.

\begin{figure}[t]
	\centering	

        \subfloat[\scriptsize  Eigenvalues of $-(\bm{I}_N\otimes\bm{H})(\bm{L}\otimes\bm{I}_D)=-(\bm{L}\otimes\bm{H})$]{\raisebox{0ex}% select appropriate amount
		{\includegraphics[width=0.35\textwidth]{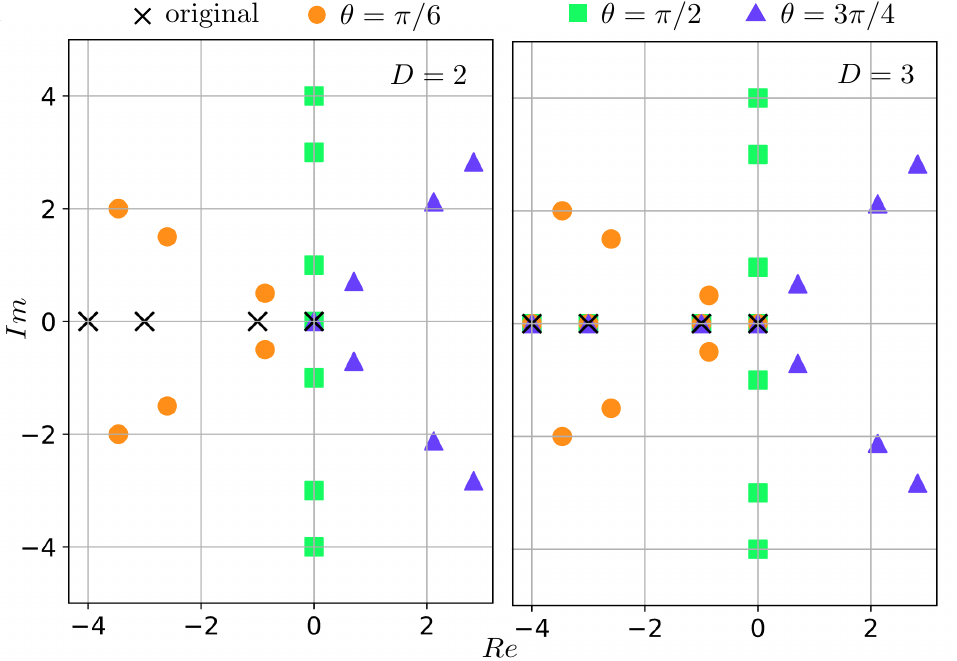}}%	
	}
	\hspace{1ex}
	\subfloat[\scriptsize  Eigenvalues of $\gamma\qty(\tilde{\bm{U}}_1^T\tilde{\bm{H}}\tilde{\bm{U}}_1)$]{\raisebox{0ex}% select appropriate amount
		{\includegraphics[width=0.35\textwidth]{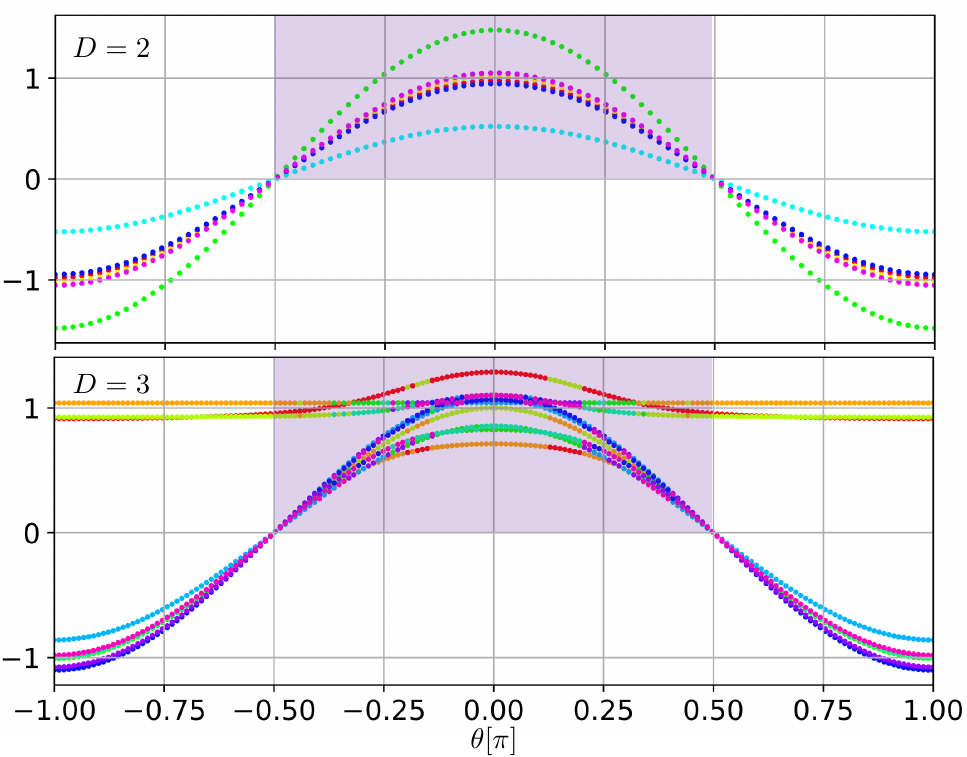}}%
	}

	\caption{A numerical example of the eigenvalues under homogeneous and proper rotations.}
	\label{fig: eig homo prop}
		
\end{figure}

% \subsubsection{Homogeneous Improper Rotations}
\begin{theoremx}\label{thm: homo improp rot}
(\textit{Stability under homogeneous and improper rotations})
The ambiguous system (\ref{equ: homo rot}) is unstable under improper rotations $\bm{H}$.
\end{theoremx}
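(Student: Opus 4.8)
The plan is to reduce the stability condition of Theorem~\ref{thm: stability ambi} to a statement about the single local rotation $\bm{H}$, exploiting homogeneity exactly as in the proof of Theorem~\ref{thm: homo proper rot}. Since $\tilde{\bm{H}} = \bm{I}_N\otimes\bm{H}$ and $\tilde{\bm{U}}_1 = \bm{U}_1\otimes\bm{I}_D$ with $\bm{U}_1^T\bm{U}_1 = \bm{I}_P$, one computes $\tilde{\bm{U}}_1^T\tilde{\bm{H}}\tilde{\bm{U}}_1 = \bm{I}_P\otimes\bm{H}$ and hence $\gamma(\tilde{\bm{U}}_1^T\tilde{\bm{H}}\tilde{\bm{U}}_1) = \bm{I}_P\otimes\gamma(\bm{H})$. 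By Theorem~\ref{thm: stability ambi} the system (\ref{equ: homo rot}) is globally and exponentially stable if and only if $\gamma(\bm{H})\succ 0$, so it suffices to show that $\gamma(\bm{H})$ fails to be positive definite for every improper rotation and every angle $\theta$.

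Next I would substitute the improper-rotation forms of Table~\ref{tab: rot mats}. For $D=2$ the product $\mathcal{R}(\theta)\bm{T}$ is already symmetric, so $\gamma(\bm{H}) = \bm{H}$ is a reflection with $\det\bm{H} = -1$; its two real eigenvalues thus multiply to $-1$ and must be $\pm 1$, making $\gamma(\bm{H})$ indefinite for all $\theta$. For $D=3$ the symmetric part collapses to $\gamma(\bm{H}) = \diag(\cos\theta,\cos\theta,-1)$, which carries the eigenvalue $-1$ regardless of $\theta$. In both cases $\gamma(\bm{H})\not\succ 0$, so the necessary-and-sufficient criterion of Theorem~\ref{thm: stability ambi} is violated for every $\theta$ and the system cannot be exponentially stable.

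To upgrade ``not exponentially stable'' to genuine instability, I would exhibit an eigenvalue of the closed loop with strictly positive real part directly from the Kronecker structure. In the homogeneous case $-\tilde{\bm{H}}\tilde{\bm{L}} = -(\bm{L}\otimes\bm{H})$, whose spectrum is $\sigma(-(\bm{L}\otimes\bm{H})) = \{-\lambda_i\mu_j\}$ with $\lambda_i\in\sigma(\bm{L})$ and $\mu_j\in\sigma(\bm{H})$. The computations above show that every improper rotation in Table~\ref{tab: rot mats} admits $\mu = -1$ as an eigenvalue (the $-1$ of the reflection for $D=2$, the out-of-plane entry for $D=3$); pairing it with any nonzero $\lambda_i > 0$ of the generalized Laplacian yields $-\lambda_i(-1) = \lambda_i > 0$. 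Hence $-(\bm{L}\otimes\bm{H})$ possesses real eigenvalues with strictly positive real part, certifying that trajectories diverge along the corresponding modes, which proves the claim.

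The step I expect to be the main obstacle is the last one: the computation $\gamma(\bm{H})\not\succ 0$ only negates the stability criterion, so care is needed to argue that this truly implies divergence rather than marginal behavior, since a non-positive-definite symmetric part does not by itself force an unstable eigenvalue in general. The explicit Kronecker spectrum resolves this cleanly because homogeneity makes the closed-loop eigenvalues available in closed form; the only remaining point to verify carefully is the determinant-and-eigenvalue bookkeeping of the improper rotations in each dimension, namely that $\det\bm{H}=-1$ indeed forces $-1\in\sigma(\bm{H})$ for the canonical forms used.
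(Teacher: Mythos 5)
Your proof is correct and follows the paper's own route: the first two paragraphs reproduce the paper's main argument (reduce via Theorem~\ref{thm: stability ambi} and the identity $\tilde{\bm{U}}_1^T\tilde{\bm{H}}\tilde{\bm{U}}_1=\bm{I}_P\otimes\bm{H}$ to showing $\gamma(\bm{H})\not\succ 0$ for improper $\bm{H}$), and your third paragraph is exactly the paper's Appendix Lemma~\ref{lmm: alt improper rot}, which exhibits the positive eigenvalue $\lambda_i>0$ of $-(\bm{L}\otimes\bm{H})$ arising from $-1\in\sigma(\bm{H})$. The care you take in distinguishing ``not exponentially stable'' from ``unstable'' is warranted: the paper's main proof elides this point, and only its appendix argument actually certifies divergence.
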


\begin{proof}
    We show that $\gamma(\tilde{\bm{U}}_1^T\tilde{\bm{H}}\tilde{\bm{U}}_1)\succ 0$ is impossible. Recall from (\ref{equ: reduced GED}) that $\tilde{\bm{U}}_1^T\tilde{\bm{H}}\tilde{\bm{U}}_1$ can be simplified to $\bm{I}_P\otimes\bm{H}$. 
    Due to the negative determinant of an improper rotation, $\gamma(\bm{H})$ is never positive-definite. As such, $\gamma(\bm{I}_P\otimes\bm{H})$ is not positive-definite either and hence the system is unstable under improper rotations independent of dimension $D$.  
    % $\gamma(\bm{H})\succ 0$ is impossible, and so is $\gamma(\bm{I}_P\otimes\bm{H})$. Hence, the system is unstable under improper rotations independent of dimension $D$.
\end{proof} 
Fig. \ref{fig: eig homo improp} shows the same example as Fig. \ref{fig: eig homo prop} but with improper rotations, where there are always positive eigenvalues present in (a) and there is no region in (b) where $\gamma\qty(\tilde{\bm{U}}_1^T\tilde{\bm{H}}\tilde{\bm{U}}_1)\succ 0$. Hence, system (\ref{equ: homo rot}) is not stable under improper rotations.

% Alternative proofs for Theorems \ref{thm: homo proper rot}  and \ref{thm: homo improp rot} are presented in  Appendix \ref{apdx: alt proofs}.
\begin{figure}[t]
	\centering	

        \subfloat[\scriptsize Eigenvalues of $-(\bm{I}_N\otimes\bm{H})(\bm{L}\otimes\bm{I}_D) = -(\bm{L}\otimes\bm{H})$]{\raisebox{0ex}% select appropriate amount
		{\includegraphics[width=0.35\textwidth]{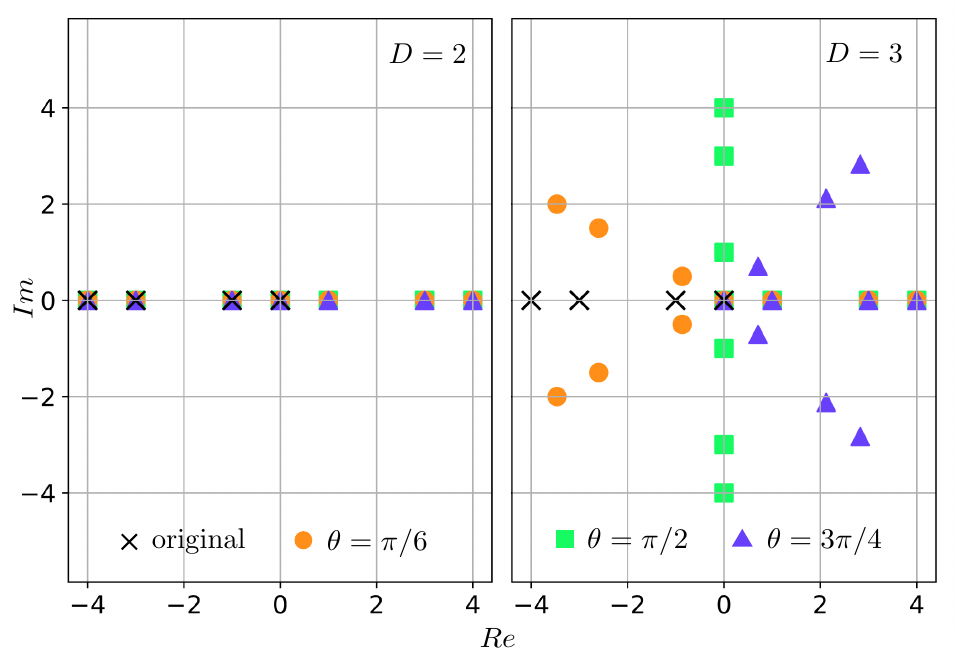}}%	
	}
	\vspace{1mm}
	\subfloat[\scriptsize Eigenvalues of $\gamma\qty(\tilde{\bm{U}}_1^T\tilde{\bm{H}}\tilde{\bm{U}}_1)$]{\raisebox{0ex}% select appropriate amount
		{\includegraphics[width=0.35\textwidth]{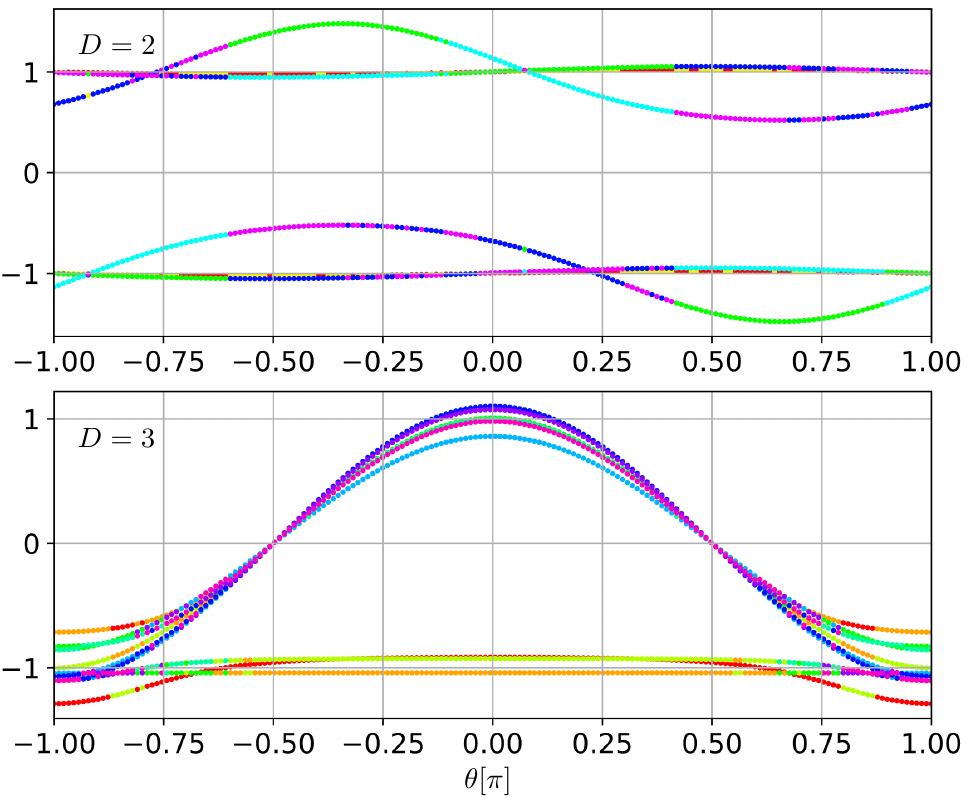}}%
	}

	\caption{A numerical example of the eigenvalues under homogeneous and improper rotations.}
	\label{fig: eig homo improp}
		
\end{figure}

\subsection{Stability under Heterogeneous Ambiguities}\label{sec: hetero stability}
In some distributed cases, each agent might have a different perception of the other agents' positions which we model as heterogeneous ambiguities, i.e., $\bar{\bm{z}}_j = \bm{H}_i\bm{z}_j, \forall j\in\{i\}\cup\mathcal{N}_i$ for agent $i$. The general ambiguous system (\ref{equ: general ambiguous}) is now \begin{align}\label{equ: hetero rot}
    \dot{\bm{z}} 
    = -\bdiag\qty(\bm{H}_1,...,\bm{H}_N)(\bm{L}\otimes\bm{I}_D)\bm{z},
    = -\tilde{\bm{H}}\tilde{\bm{L}}\bm{z}.
\end{align} There are three potential cases under this model: (a) all proper rotations, (b) all improper rotations, or (c) a mixture of proper and improper rotations across the agents. 

% \subsubsection{Heterogeneous Proper Rotations}
We show that, unlike the homogeneous scenario, in the case of heterogeneous proper rotations, $\theta_i\in(-\frac{\pi}{2},\frac{\pi}{2}) \ \forall i\in\mathcal{V}$ is a sufficient but not necessary condition for a stable system under proper rotations $\bm{H}_i \ \forall i\in\mathcal{V}$.
\begin{theoremx}\label{thm: hetero proper rot}
(\textit{Stability under heterogeneous and proper rotations})
The ambiguous system (\ref{equ: hetero rot}) is globally and exponentially stable if the corresponding rotation angles $\theta_i$ of a local proper rotation $\bm{H}_i = \mathcal{R}(\theta_i)$ lie within range $\theta_i\in(-\frac{\pi}{2},\frac{\pi}{2}) \ \forall i\in\mathcal{V}$.
\end{theoremx}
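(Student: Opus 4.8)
The plan is to establish stability via the \emph{sufficient} condition of Corollary~\ref{crly: pd sandwich} rather than the necessary-and-sufficient condition of Theorem~\ref{thm: stability ambi}, since the hypothesis here claims only sufficiency. Thus it suffices to verify that $\gamma(\tilde{\bm{H}}) \succ 0$ whenever every $\theta_i \in (-\frac{\pi}{2}, \frac{\pi}{2})$. The key structural observation is that $\tilde{\bm{H}} = \bdiag(\bm{H}_1, \dots, \bm{H}_N)$ is block diagonal, and hence so is its symmetric part: $\gamma(\tilde{\bm{H}}) = \bdiag(\gamma(\bm{H}_1), \dots, \gamma(\bm{H}_N))$. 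A symmetric block diagonal matrix is positive-definite if and only if each diagonal block is, so the global condition decouples into $N$ independent conditions $\gamma(\bm{H}_i) \succ 0$.

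First I would compute $\gamma(\bm{H}_i)$ for a proper rotation $\bm{H}_i = \mathcal{R}(\theta_i)$ in each dimension using Table~\ref{tab: rot mats}. Because the off-diagonal sine terms are antisymmetric, they cancel in $\bm{H}_i + \bm{H}_i^T$, leaving $\gamma(\bm{H}_i) = \cos\theta_i\, \bm{I}_2$ for $D=2$ and $\gamma(\bm{H}_i) = \diag([\cos\theta_i, \cos\theta_i, 1])$ for $D=3$. In both cases the eigenvalues of $\gamma(\bm{H}_i)$ are $\cos\theta_i$ (together with a trailing $1$ when $D=3$), so $\gamma(\bm{H}_i) \succ 0$ precisely when $\cos\theta_i > 0$, i.e.\ when $\theta_i \in (-\frac{\pi}{2}, \frac{\pi}{2})$. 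Note that these computations reuse the block-level identities already derived in (\ref{equ: homo connonical}) and (\ref{equ: reduced GED}) for the homogeneous case.

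Combining these, if every $\theta_i$ lies in $(-\frac{\pi}{2}, \frac{\pi}{2})$ then each block satisfies $\gamma(\bm{H}_i) \succ 0$, whence $\gamma(\tilde{\bm{H}}) \succ 0$, and Corollary~\ref{crly: pd sandwich} delivers global exponential stability. There is no serious technical obstacle; the only conceptual point is the deliberate choice of the weaker Corollary~\ref{crly: pd sandwich} over Theorem~\ref{thm: stability ambi}. This choice is also exactly what makes the result sufficient but not necessary: since $\tilde{\bm{U}}_1$ is a tall matrix, the compression $\gamma(\tilde{\bm{U}}_1^T \tilde{\bm{H}} \tilde{\bm{U}}_1)$ can remain positive-definite even when some individual $\gamma(\bm{H}_i)$ fails to be so, meaning heterogeneous rotations that individually violate the $(-\frac{\pi}{2}, \frac{\pi}{2})$ bound may still yield a stable system. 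I would flag this as the sharp contrast with the homogeneous case of Theorem~\ref{thm: homo proper rot}, where the range was tight, motivating why only a sufficient direction is asserted here.
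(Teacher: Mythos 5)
Your proof is correct and takes essentially the same route as the paper's: both invoke the sufficient condition $\gamma(\tilde{\bm{H}})\succ 0$ from Corollary~\ref{crly: pd sandwich} and verify it blockwise, reducing each block to $\cos\theta_i>0$ via the cancellation of the antisymmetric sine terms. Your closing remark on why only sufficiency holds (the tall $\tilde{\bm{U}}_1$ compression) also matches the paper's own discussion following the theorem.
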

\begin{proof} Recollect from Lemma \ref{crly: pd sandwich} that the sufficient condition for a stable system is to have a generalized positive-definite $\tilde{\bm{H}}$. For $D=2$ and $D=3$, we observe that
\begin{align}
    \gamma(\tilde{\bm{H}}) &= \bdiag(\cos\theta_1\bm{I}_2,...,\cos\theta_N\bm{I}_2),
    \quad (D=2)
    \label{equ: blk cos pd} \\
    \gamma(\tilde{\bm{H}}) &=
    \bdiag\qty(\gamma\qty(\mathcal{R}(\theta_1)),...,\gamma\qty(\mathcal{R}
    (\theta_N))),
    \quad (D=3)
\end{align} which are both diagonal matrices, with the property $\gamma(\tilde{\bm{H}})\succ 0$ if $\theta_i\in(-\frac{\pi}{2},\frac{\pi}{2}) \ \forall i\in\mathcal{V}$, for a stable system (\ref{equ: hetero rot}). \end{proof}

We use the same settings as in Fig. \ref{fig: eig homo prop} but with heterogeneous rotations to show another numerical example in Fig. \ref{fig: eig hetero} (a) where the smallest eigenvalue is shown in a heatmap across a spectrum of $\theta_3$ and $\theta_4$. Observe that the white bounding box, inside of which are eigenvalues greater than zero, is bigger than the area boxed by $(-\frac{\pi}{2},\frac{\pi}{2})$ in yellow. This shows that there exists $\theta_i\notin(-\frac{\pi}{2},\frac{\pi}{2})$ that still entails $\gamma(\tilde{\bm{U}}_1^T\tilde{\bm{H}}\tilde{\bm{U}}_1)\succ 0$ i.e.,  a stable system, which verifies the sufficiency but not necessity of Theorem \ref{thm: hetero proper rot}.

% \begin{figure}[t]
%     \centering%
%     \includegraphics[width=0.6\linewidth]{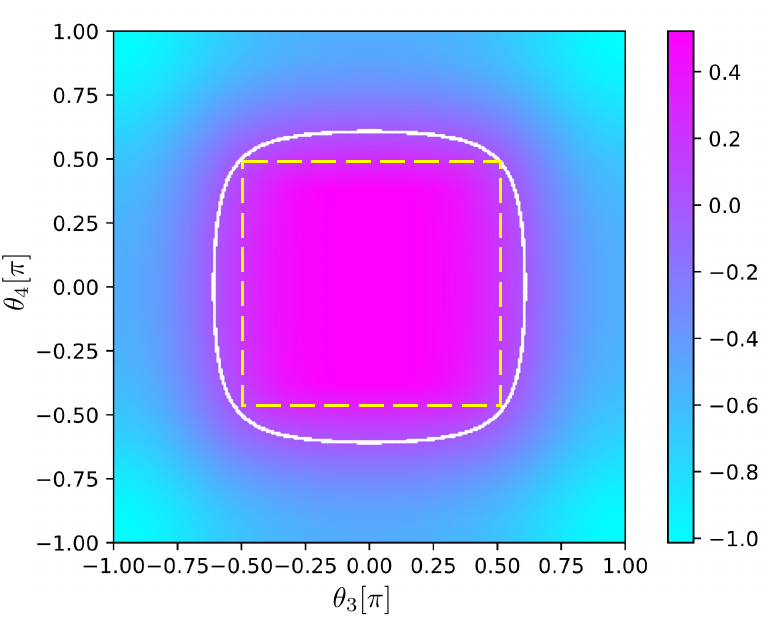}%
%     \caption{The smallest eigenvalue of $\gamma(\tilde{\bm{U}}_1^T\tilde{\bm{H}}\tilde{\bm{U}}_1)$ across a spectrum of $\theta_3$ and $\theta_4$. In this case, we set $\theta_1=\theta_2=0$. Zero values are highlighted in white and the region between $-\frac{\pi}{2}$ and $\frac{\pi}{2}$ are highlighted by a yellow dashed box.}%
%     \label{fig: eig hetero prop}%
% \end{figure}

% \begin{figure}[t]
%     \centering%
%     \includegraphics[width=0.8\linewidth]{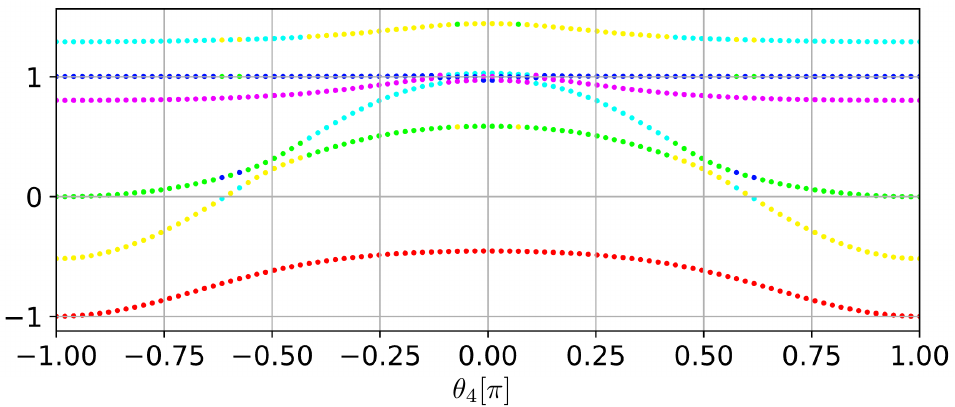}%
%     \caption{The eigenvalues of  $\gamma(\tilde{\bm{U}}_1^T\tilde{\bm{H}}\tilde{\bm{U}}_1)$. In this case, $\theta_1=\theta_2=\theta_3=0$ and $\bm{H}_i$ for $i=1,2,3$ are proper and only $\bm{H}_4$ is improper.}%
%     \label{fig: eig mix}%
% \end{figure}

\begin{figure}[t]
    \centering	

    \captionsetup[subfloat]{width=\linewidth}  % Set the caption width

    \subfloat[The smallest eigenvalue of $\gamma(\tilde{\bm{U}}_1^T\tilde{\bm{H}}\tilde{\bm{U}}_1)$ across a spectrum of $\theta_3$ and $\theta_4$. In this case, we set $\theta_1=\theta_2=0$. Zero values are highlighted in white, and the region between $-\frac{\pi}{2}$ and $\frac{\pi}{2}$ is highlighted by a yellow dashed box.]
    {\includegraphics[width=0.3\textwidth]{}}

    \vspace{1mm}

    \subfloat[Eigenvalues of$\gamma(\tilde{\bm{U}}_1^T\tilde{\bm{H}}\tilde{\bm{U}}_1)$. In this case, $\theta_1=\theta_2=\theta_3=0$ and $\bm{H}_i$ for $i=1,2,3$ are proper, while only $\bm{H}_4$ is improper.]
    {\includegraphics[width=0.4\textwidth]{}}

    \caption{A numerical example of the eigenvalues under heterogeneous rotations.}
    \label{fig: eig hetero}
\end{figure}

% \subsubsection{Improper and Mixture of Rotations}
We now discuss the cases where one or more improper rotations appear among all agents. We make a proposition and give intuitive reasoning, which is verified with numerical examples and simulations in later sections.
% each $\bm{H}_i$ can be either a rotation an improper rotation. The stability, in this case, is evident in practice but is trivial to mathematically show. We thus make a proposition and give intuitive proofs. In later sections, we also verify the proposition using numerical simulations.
\begin{propositionx}\label{prop: hetero mix rot}
(\textit{Instability under mixture of rotations})
The ambiguous system (\ref{equ: hetero rot}) is unstable if there exists $i$ such that $\bm{H}_i$ is an improper rotation.
\end{propositionx}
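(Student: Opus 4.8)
The plan is to argue through the necessity direction of Theorem~\ref{thm: stability ambi}: since the ambiguous system is globally and exponentially stable \emph{if and only if} $\gamma(\tilde{\bm{U}}_1^T\tilde{\bm{H}}\tilde{\bm{U}}_1)\succ 0$, it suffices to exhibit a single vector along which this generalized symmetric part fails to be positive, i.e.\ to produce an $\bm{x}$ in the range of $\tilde{\bm{L}}$ with $\bm{x}^T\tilde{\bm{H}}\bm{x}\le 0$, and strictly negative whenever possible. Writing $\bm{x}$ in agent blocks $\bm{x}=[\bm{x}_1^T,\dots,\bm{x}_N^T]^T$, the block-diagonal structure of $\tilde{\bm{H}}$ gives $\bm{x}^T\tilde{\bm{H}}\bm{x}=\sum_{i\in\mathcal{V}}\bm{x}_i^T\gamma(\bm{H}_i)\bm{x}_i$, so instability would be witnessed by a range vector on which the negative curvature contributed by the improper block overwhelms the contributions of the others.

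The enabling observation, already implicit in the proof of Theorem~\ref{thm: homo improp rot}, is that an improper rotation never has a positive-definite symmetric part: for $D=2$ and $D=3$, $\gamma(\bm{H}_j)$ admits an eigenvalue $-1$ with a corresponding unit eigenvector $\bm{q}_j$ satisfying $\bm{q}_j^T\gamma(\bm{H}_j)\bm{q}_j=-1$, whereas $\gamma(\bm{H}_i)\preceq\bm{I}_D$ for every rotation. I would then build the test vector around the improper index $j$ while respecting the fact that range vectors of $\tilde{\bm{L}}$ are orthogonal to $\mathrm{null}(\bm{L})\otimes\mathbb{R}^D$; for a rank-$(N-1)$ Laplacian this is exactly the constraint $\sum_{i}\bm{x}_i=\bm{0}$. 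A convenient first choice is $\bm{x}=(\bm{e}_j-\bm{e}_k)\otimes\bm{q}_j$ for any $k\neq j$, which lies in the range and yields $\bm{x}^T\tilde{\bm{H}}\bm{x}=-1+\bm{q}_j^T\gamma(\bm{H}_k)\bm{q}_j\le 0$. To sharpen this to strict negativity I would instead spread the load, taking $\bm{x}_i=\bm{q}_j$ for all $i\neq j$ and $\bm{x}_j=-(N-1)\bm{q}_j$, for which $\bm{x}^T\tilde{\bm{H}}\bm{x}\le(N-1)(2-N)<0$ as soon as $N\ge 3$, independent of the proper angles of the remaining agents. This certifies that $\gamma(\tilde{\bm{U}}_1^T\tilde{\bm{H}}\tilde{\bm{U}}_1)$ is not positive-definite and, complementarily, makes the disagreement energy $V=\tfrac{1}{2}\bm{z}^T\tilde{\bm{L}}\bm{z}$ grow, since $\dot{V}=-\bm{y}^T\gamma(\tilde{\bm{H}})\bm{y}$ with $\bm{y}=\tilde{\bm{L}}\bm{z}$ becomes positive along the constructed direction.

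The hard part, and the reason this is stated as a proposition rather than a theorem, is turning these observations into a uniform divergence proof. First, the range constraint couples the blocks and forbids isolating the reflected agent, so the witness must borrow norm from neighbors; in the degenerate corner $N=2$ with the companion block equal to the identity the quadratic form collapses to exactly zero, leaving only marginal behavior. Second, for a reduced-rank generalized Laplacian such as a stress matrix the nullspace is larger than $\mathrm{span}(\bm{1}_N)$, so the difference and spread vectors above need not lie in the range and the witness would have to be re-engineered per graph. Third, and most seriously, passing rigorously from ``$\gamma(\tilde{\bm{U}}_1^T\tilde{\bm{H}}\tilde{\bm{U}}_1)$ is not positive-definite'' to an eigenvalue of $-\tilde{\bm{H}}\tilde{\bm{L}}$ with strictly positive real part is not supplied by Lemma~\ref{lmm: adapted Hurwitz}, which only certifies definiteness; for the genuinely nonsymmetric $\bm{M}=\tilde{\bm{U}}_1^T\tilde{\bm{H}}\tilde{\bm{U}}_1$ arising here, indefiniteness of $\gamma(\bm{M})$ need not by itself force an unstable mode. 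Because these steps resist a treatment uniform over all graphs, Laplacian ranks, and rotation mixtures, I would present the reflection-induced sign flip of the local feedback as the physical mechanism of instability and defer full confirmation to the numerical studies, consistent with the paper's approach.
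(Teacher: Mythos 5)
Your route is genuinely different from the paper's, and in fact more substantive: the paper does not prove this statement at all, which is precisely why it is demoted to a proposition. Its entire argument is the factorization $\tilde{\bm{H}} = \bdiag(\mathcal{R}(\theta_1),\dots,\mathcal{R}(\theta_N))\,\bdiag(\bm{T}_1,\dots,\bm{T}_N)$ together with the remark that $\bdiag(\bm{T}_1,\dots,\bm{T}_N)(\bm{L}\otimes\bm{I}_D)$ is ``a new Laplacian with some rows negated'' that is no longer positive semi-definite, followed by numerical verification; no eigenvalue or quadratic-form computation is carried out. Your witness construction --- taking the $-1$-eigenvector $\bm{q}_j$ of $\gamma(\bm{H}_j)$ at the improper agent, embedding it in a zero-sum block vector, and combining $\bm{x}^T\tilde{\bm{H}}\bm{x}=\sum_i\bm{x}_i^T\gamma(\bm{H}_i)\bm{x}_i$ with $\gamma(\bm{H}_i)\preceq\bm{I}_D$ for every rotation --- actually establishes, for a rank-$(N-1)$ Laplacian with $N\ge 3$, that the necessary-and-sufficient condition of Theorem~\ref{thm: stability ambi} fails, which is strictly more than the paper delivers. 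Your three caveats are accurate, and two deserve emphasis. The reduced-rank case matters because the proposition's scope includes stress matrices (used in Section~\ref{sec: sim}), where the range of $\tilde{\bm{L}}$ is a proper subspace of the zero-sum vectors and your witness need not lie in it, so the argument is not uniform over the generalized Laplacians of (\ref{equ: def Lap}). And the gap between ``$\gamma(\tilde{\bm{U}}_1^T\tilde{\bm{H}}\tilde{\bm{U}}_1)$ is not positive-definite'' and ``$-\tilde{\bm{H}}\tilde{\bm{L}}$ has an eigenvalue with strictly positive real part'' is real, but it is not a defect specific to your attempt: the paper's own Theorem~\ref{thm: homo improp rot} has exactly the same gap, since its proof also only refutes positive-definiteness. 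If you accept Theorem~\ref{thm: stability ambi} as the stated equivalence, your computation already negates global exponential stability, which is all the paper's own arguments establish anywhere. In short, your proposal supplies a partial proof where the paper supplies none, and it correctly delimits what remains open; the only adjustment I would suggest is to lean explicitly on the ``only if'' direction of Theorem~\ref{thm: stability ambi} rather than apologizing for not producing an unstable eigenvalue directly.
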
 A special case of this scenario is that $\bm{H}_i\forall i \in \mathcal{V}$ are homogeneous improper rotations, which is proven to be unstable in Theorem \ref{thm: homo improp rot}. The more general case from (\ref{equ: general ambiguous}) is
\begin{align}
    &\tilde{\bm{H}}(\bm{L}\otimes\bm{I}_D) = \notag\\
      &\begin{bmatrix}
 \mathcal{R}(\theta_1) &  & \\
  & \ddots & \\
  &  & \mathcal{R}(\theta_N)
\end{bmatrix}
\begin{bmatrix}
 \bm{T}_1 &  & \\
  & \ddots & \\
  &  & \bm{T}_N
\end{bmatrix}(\bm{L}\otimes\bm{I}_D),
\end{align}
where $\bm{T}_i=\bm{I}$ if $\bm{H}_i$ for any $i \in \mathcal{V}$ is a proper rotation. We can consider $\bdiag\qty(\bm{T}_1,...,\bm{T}_N)(\bm{L}\otimes\bm{I}_D)$ a new Laplacian matrix where some rows are negated if certain $\bm{H}_i$ is not proper. This new Laplacian matrix is no longer symmetric positive semi-definite in general and yields an unstable system regardless of what the proper rotation part is. The example in Fig. \ref{fig: eig hetero} (b) shows that as long as one $\bm{H}_i$ for any $i \in \mathcal{V}$ is improper, $\gamma(\tilde{\bm{U}}_1^T\tilde{\bm{H}}\tilde{\bm{U}}_1)$ is not positive-definite even if all the other agents are unambiguous. 

% \begin{figure}[thb]
%     \centering%
%     \includegraphics[width=0.8\linewidth]{figures/eig_GPD_mix.pdf}%
%     \caption{The eigenvalues of  $\gamma(\tilde{\bm{U}}_1^T\tilde{\bm{H}}\tilde{\bm{U}}_1)$. In this case, $\theta_1=\theta_2=\theta_3=0$ and $\bm{H}_i$ for $i=1,2,3$ are proper and only $\bm{H}_4$ is improper.}%
%     \label{fig: eig mix}%
% \end{figure}

%%%%%%%%%%%%%%%%%%%%%%%%%%%%%%%%%%%%%%%%%%%%%%%%%%%%%%%%%%%%%%%%%%
\section{Examples and Simulations}\label{sec: sim}
In this section, we verify our theorems and proposition with two algorithms under consensus frameworks, namely, the rendezvous control \cite{amirkhani2022consensus} and distributed formation control \cite{zhao2018affine}. The graphs used for each case are shown in Fig. \ref{fig: used graphs}, where there are $4$ and $7$ nodes, respectively. Agents in these algorithms are assumed to adopt single integrator dynamics $\dot{\bm{z}}_i=\bm{u}_i\ \forall i\in\mathcal{V}$. In both scenarios, the global dynamical model (\ref{equ: consensus system}) translates to local control input as follows,
\begin{equation}
    \bm{u}_i = -\sum_{j\in\mathcal{N}_i}l_{ij}(\bm{z}_i-\bm{z}_j),
\end{equation}
where $l_{ij}$ are the Laplacian weights in (\ref{equ: def Lap}). For a consensus system (\ref{equ: consensus system}), the equilibrium points are not unique i.e., the steady state depends on the initialization. A small subset of the nodes called leaders, which have global objectives and independent dynamics than the others, are typically used to guarantee a unique equilibrium. We also adopt leader(s) shown in Fig. \ref{fig: used graphs} for a fair error comparison. Note that if agent $i \in \mathcal{V}$ is a leader, then the nature of $\bm{H}_i$ does not affect the system due to their different dynamics.

\begin{figure}[t]
	\centering	

        \subfloat[]{\raisebox{0ex}% select appropriate amount
		{\includegraphics[width=0.11\textwidth]{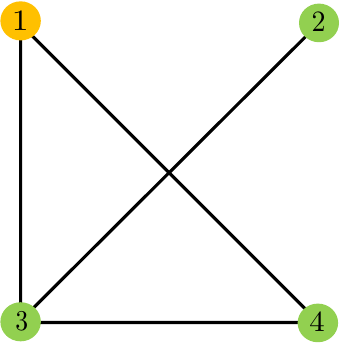}}%	
	}
	\hspace{2ex}
	\subfloat[]{\raisebox{0ex}% select appropriate amount
		{\includegraphics[width=0.17\textwidth]{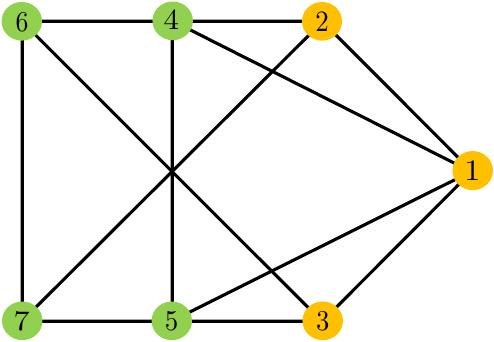}}%
	}

	\caption{The graphs for (a) rendezvous control and (b) distributed formation control, where the orange nodes are leaders.}
	\label{fig: used graphs}
		
\end{figure}

\subsection{Case 1: Rendezvous Control}
The rendezvous control algorithm \cite{amirkhani2022consensus,lin2003multi}, which originates from the classic average consensus algorithm \cite{olfati2007consensus}, ensures all agents converge to a common location. It involves a standard graph Laplacian $\bm{L}$ that has rank $N-1$. For graph Fig. \ref{fig: used graphs} (a), there are $P=3$ non-zero eigenvalues for the Laplacian. We set node $1$ to be the leader with a constant value $\bm{z}_1=[0,0]^T$. Then the equilibrium is zero, i.e., $\bm{z}_e = [0,0]^T$. We define the error as $\delta(t) = \norm{\bm{z}(t)-\bm{z}_e} = \norm{\bm{z}(t)}$.

% \begin{figure}[t]
%     \centering%
%     \includegraphics[width=0.9\linewidth]{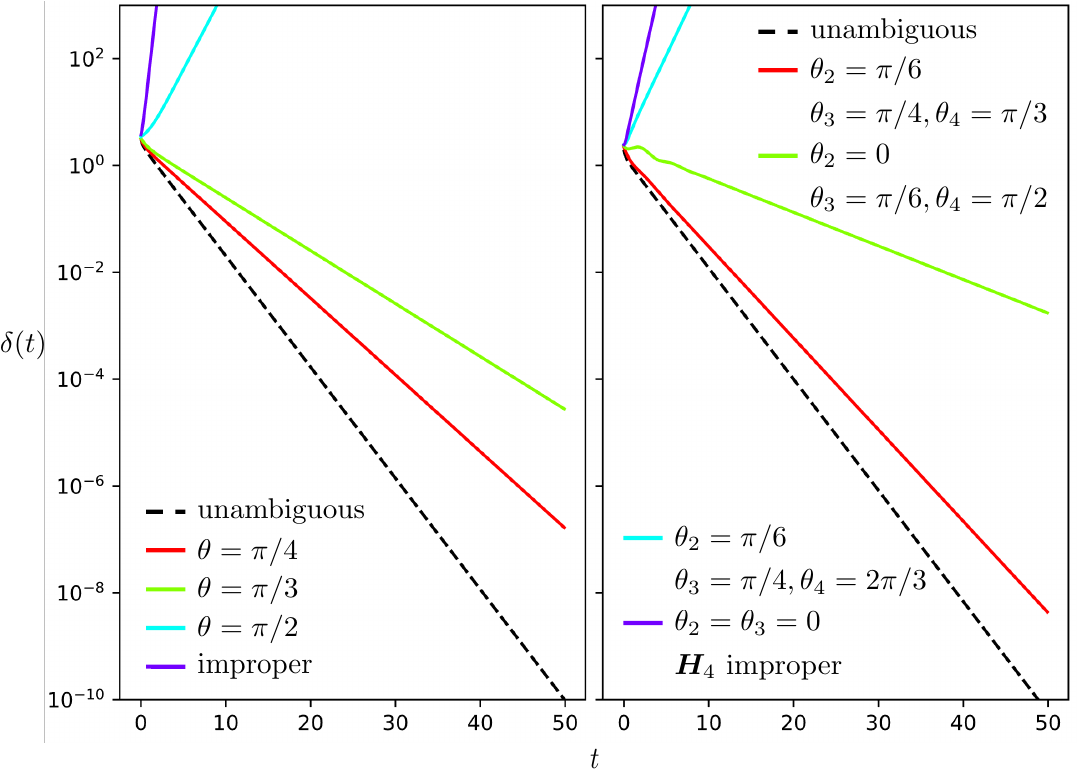}%
%     \caption{Error $\delta$ of the average consensus algorithm across time $t$. Homogeneous ambiguities (left) and heterogeneous (right)}%
%     \label{fig: sim avg con}%
% \end{figure}

% \begin{figure}[t]
%     \centering%
%     \includegraphics[width=0.9\linewidth]{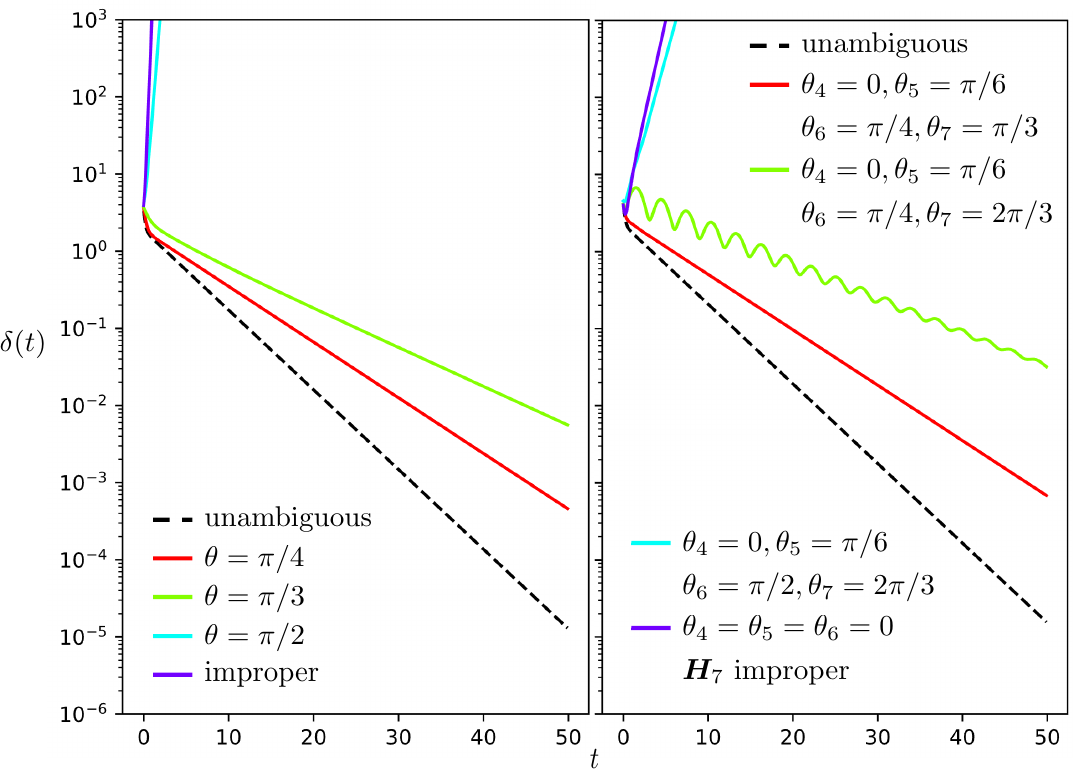}%
%     \caption{Error $\delta$ of the affine formation control algorithm across time $t$. Homogeneous ambiguities (left) and heterogeneous (right)}%
%     \label{fig: sim fc}%
% \end{figure}

\begin{figure}[t]


    \centering%
    
    \subfloat{
        \includegraphics[width=0.8\linewidth]{figures/error_avgcon.pdf}%
        \label{fig: sim avg con}
    }%
    
    \vspace{0mm}  % Space between the two subfigures
    
    \subfloat{
        \includegraphics[width=0.8\linewidth]{figures/error_fc.pdf}%
        \label{fig: sim fc}
    }%
    
    \caption{The convergence in error $\delta(t)$ across time $t$ for the rendezvous control (top) and the affine formation control algorithm (bottom) under homogeneous ambiguities (left) and heterogeneous (right).}%
    \label{fig: sim cases}%
\end{figure}

\subsection{Case 2: Distributed Formation Control}
Affine formation control \cite{lin2015necessary,zhao2018affine,li2023geometry} is a type of distributed formation control method that can also fit under the consensus framework. A generalized Laplacian, called a stress matrix \cite{alfakih2011bar} with $P=N-D-1$ non-zero eigenvalues, is then used instead of a standard Laplacian. Here, the desired formation is considered the equilibrium point of the system. We consider Fig. \ref{fig: used graphs} (b) in $\mathbb{R}^2$ with a equilibrium $\bm{z}_e=\qty[2,0,1,1,1,-1,0,1,0,-1-1,1,-1,-1]^T$. If we define the first three agents as leaders that remain at their respective target positions $\qty[\bm{z}_1^T,\bm{z}_2^T,\bm{z}_3^T] = \qty[2,0,1,1,1,-1]^T$, then the agents will converge to the defined equilibrium $\bm{z}\rightarrow\bm{z}_e$ as time $t\rightarrow\infty$ given any random initialization of the follower' positions. As such, we define the error $\delta(t) = \norm{\bm{z}(t)-\bm{z}_e}$.

\subsection{Discussion} The numerical results for both rendezvous control and affine formation control are shown in Fig. \ref{fig: sim cases}, where the cases discussed in Section \ref{sec: general conditions} are simulated as compared to unambiguous cases. As can be seen, the errors present an exponential decay (strain lines under log scale) if converging. In the homogeneous case, errors are converging for $\theta\in(-\frac{\pi}{2},\frac{\pi}{2})$ cases and diverging for $\theta=\frac{\pi}{2}$ and improper rotations, which agree to Theorem \ref{thm: homo proper rot} and \ref{thm: homo improp rot}. In the heterogeneous case, errors are converging in cases where $\theta_i\in(-\frac{\pi}{2},\frac{\pi}{2})$ for all followers and where one exceeds this range, which proves the sufficient but not necessary condition in Theorem \ref{thm: hetero proper rot}. We observe that the error diverges with one follower under improper rotations even if the other agents are unambiguous, which supports Proposition \ref{prop: hetero mix rot}.

\section{Conclusion}\label{sec: conclusions}
In this work, we conducted a theoretical analysis of the stability of consensus control where the local reference frames are subject to rotational ambiguities. We show that the system is robust to proper rotations in both homogeneous and heterogeneous cases within certain margins, however the stability is compromised in the presence of improper rotations. This provides insightful guidance for the design of relative localization and the implementation of consensus control in various applications given that state-of-art solutions assume an aligned local reference frames for all nodes. We observe in numerical examples that different rotation angles have a varying impact on the convergence rate. In our ongoing work, we aim to study this relationship in depth and explore the effect of ambiguities on consensus systems at large.

\appendices

\section{Alternative Proof for Stability under Homogeneous Rotations}\label{apdx: alt proofs}

\begin{lemmax}\label{lmm: kron eigs}
        (\textit{Eigenvalues and eigenvectors of Kronecker products of matrices} \cite{broxson2006kronecker}) 
            Suppose $\bm{A}$ and $\bm{B}$ are square matrices of size $N$ and $M$ respectively and they admit $\bm{A}\bm{v}_n=\lambda_n\bm{v}_n$ for $n=1,...,N$ and $\bm{B}\bm{w}_m=\mu_m\bm{w}_m$ for $m=1,...,M$, then $\bm{v}_n\otimes\bm{w}_m$ is an eigenvector of $\bm{A}\otimes\bm{B}$ corresponding to the eigenvalue $\lambda_n\mu_m$. Additionally, the set of all eigenvalues of $\bm{A}\otimes\bm{B}$ is $\qty{\lambda_n\mu_m:n=1,...,N, m=1,...,M}$.
        \end{lemmax}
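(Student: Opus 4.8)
The plan is to establish the eigenpair claim by a direct substitution using the mixed-product (distributive) property of the Kronecker product, and then to argue completeness of the eigenvalue set by a dimension-counting argument based on linear independence.

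First I would invoke the identity $(\bm{A}\otimes\bm{B})(\bm{C}\otimes\bm{D}) = (\bm{A}\bm{C})\otimes(\bm{B}\bm{D})$, valid whenever the indicated products are defined. Applying this with $\bm{C}=\bm{v}_n$ and $\bm{D}=\bm{w}_m$ treated as single-column matrices, I would compute
\[
(\bm{A}\otimes\bm{B})(\bm{v}_n\otimes\bm{w}_m) = (\bm{A}\bm{v}_n)\otimes(\bm{B}\bm{w}_m) = (\lambda_n\bm{v}_n)\otimes(\mu_m\bm{w}_m) = \lambda_n\mu_m(\bm{v}_n\otimes\bm{w}_m),
\]
where the final equality uses the scalar homogeneity of the Kronecker product. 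This immediately shows that $\bm{v}_n\otimes\bm{w}_m$ is an eigenvector of $\bm{A}\otimes\bm{B}$ corresponding to the eigenvalue $\lambda_n\mu_m$, settling the first claim.

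For the second claim, I would argue that the $NM$ products $\{\bm{v}_n\otimes\bm{w}_m\}$ are linearly independent and hence form a complete eigenbasis of the $NM$-dimensional space. Assuming the $\{\bm{v}_n\}$ and $\{\bm{w}_m\}$ are each linearly independent (as holds for diagonalizable $\bm{A}$ and $\bm{B}$, and in particular for the symmetric Laplacians of interest here), a vanishing combination $\sum_{n,m}c_{nm}\,\bm{v}_n\otimes\bm{w}_m=\bm{0}$ can be regrouped as $\sum_n \bm{v}_n\otimes(\sum_m c_{nm}\bm{w}_m)=\bm{0}$; independence of the $\bm{v}_n$ forces each inner combination to vanish, and independence of the $\bm{w}_m$ then forces every $c_{nm}=0$. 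Since an $NM\times NM$ matrix has at most $NM$ eigenvalues counted with multiplicity, these $NM$ eigenpairs exhaust the spectrum, giving $\sigma(\bm{A}\otimes\bm{B})=\{\lambda_n\mu_m\}$.

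The main obstacle is the completeness argument rather than the eigenpair verification: the substitution step is a one-line calculation, whereas asserting that the listed products are \emph{all} of the eigenvalues requires the independence of the Kronecker products. For matrices that fail to be diagonalizable this does not yield a full eigenbasis, and one would instead triangularize via Schur decompositions $\bm{A}=\bm{P}_A\bm{T}_A\bm{P}_A^{-1}$ and $\bm{B}=\bm{P}_B\bm{T}_B\bm{P}_B^{-1}$, then read the spectrum off the upper-triangular factor in $\bm{A}\otimes\bm{B}=(\bm{P}_A\otimes\bm{P}_B)(\bm{T}_A\otimes\bm{T}_B)(\bm{P}_A\otimes\bm{P}_B)^{-1}$. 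Since the application in this paper concerns only symmetric Laplacians, the simpler diagonalizable case suffices and I would present only that version.
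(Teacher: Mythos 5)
Your proof is correct: the mixed-product identity gives the eigenpair in one line, and your linear-independence argument (valid here since the Laplacians and rotation blocks in question are diagonalizable, and you rightly flag the Schur-triangularization route for the general case) legitimately establishes that these $NM$ pairs exhaust the spectrum. The paper itself offers no proof of this lemma, simply citing it to the reference, and your argument is the standard one that reference relies on, so there is nothing to contrast.
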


\begin{lemmax}\label{lmm: alt proper rot}
(\textit{Rotation of eigenvalues of proper rotations})
The eigenvalues of the rotated system (\ref{equ: homo rot}) are the ones of the negative Laplacian $-\bm{L}$ rotated in the complex plane by $\qty{\theta, -\theta}$ if $D=2$ and $\qty{0,\theta,-\theta}$ if $D=3$, given $\bm{H} = \mathcal{R}(\theta)$.
\end{lemmax}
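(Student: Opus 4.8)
The plan is to exploit the mixed-product property of the Kronecker product together with Lemma~\ref{lmm: kron eigs}. First I would rewrite the system matrix: since $(\bm{I}_N\otimes\bm{H})(\bm{L}\otimes\bm{I}_D)=\bm{L}\otimes\bm{H}$, the homogeneous dynamics (\ref{equ: homo rot}) are governed by $-(\bm{L}\otimes\bm{H})$, exactly the reduction already noted in the caption of Fig.~\ref{fig: eig homo prop}. Hence the spectrum of interest is $\sigma(-(\bm{L}\otimes\bm{H}))$, and it suffices to locate the eigenvalues of $\bm{L}\otimes\bm{H}$ and then negate.

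Next I would apply Lemma~\ref{lmm: kron eigs}, which states that every eigenvalue of $\bm{L}\otimes\bm{H}$ is a product $\lambda_n\mu_m$ with $\lambda_n\in\sigma(\bm{L})$ and $\mu_m\in\sigma(\bm{H})$. Negating, the eigenvalues of the system matrix become $\{-\lambda_n\mu_m\}$, i.e.\ each eigenvalue $-\lambda_n$ of $-\bm{L}$ multiplied by some $\mu_m\in\sigma(\bm{H})$. Because $\bm{L}$ is symmetric positive-semidefinite, every $-\lambda_n$ is a non-positive real number, so the geometric picture reduces to understanding how multiplication by $\mu_m$ acts on points of the real axis.

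The remaining step is to identify $\sigma(\bm{H})$ for a proper rotation. For $D=2$, the characteristic polynomial of $\mathcal{R}(\theta)$ from Table~\ref{tab: rot mats} yields $\mu=\cos\theta\pm i\sin\theta=e^{\pm i\theta}$. For $D=3$, the block structure contributes the additional eigenvalue $1$ from the invariant ``$z$-axis'', so $\sigma(\bm{H})=\{e^{i\theta},e^{-i\theta},1\}$. Since each $\mu_m$ has unit modulus, multiplying the real value $-\lambda_n$ by $\mu_m$ preserves its magnitude and merely adds $\arg(\mu_m)$ to its argument, i.e.\ rotates it about the origin by $\{\theta,-\theta\}$ when $D=2$ and by $\{0,\theta,-\theta\}$ when $D=3$. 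This is precisely the claimed statement.

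The computations are elementary, so I do not expect a genuine obstacle; the only point requiring care is the bookkeeping for $D=3$, namely confirming that the unit eigenvalue along the rotation axis is retained so that the set of rotation angles includes the extra unrotated element $0$. As a remark, this lemma gives an alternative geometric derivation of Theorem~\ref{thm: homo proper rot}: stability requires every $-\lambda_n e^{\pm i\theta}$ to lie in the open left half-plane, and since $-\lambda_n\le 0$ this holds exactly when $\cos\theta>0$, i.e.\ $\theta\in(-\tfrac{\pi}{2},\tfrac{\pi}{2})$.
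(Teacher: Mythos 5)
Your proposal is correct and follows essentially the same route as the paper's own proof: rewrite the system matrix as $-(\bm{L}\otimes\bm{H})$ via the mixed-product property, invoke Lemma~\ref{lmm: kron eigs} to obtain the eigenvalue products $\mu_d\lambda_n$, and identify $\sigma(\bm{H})$ as $\{e^{\pm i\theta}\}$ (plus the unit eigenvalue along the rotation axis for $D=3$). Your closing remark tying the result back to Theorem~\ref{thm: homo proper rot} is exactly the role this lemma plays in the paper's appendix.
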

\begin{proof}
Observe $-(\bm{I}_N\otimes \bm{H})(\bm{L}\otimes\bm{I}_D)= -(\bm{L}\otimes\bm{H})$. Let $\lambda_n$ for $n=1,...,N$ denote the eigenvalues of $-\bm{L}$ and $\mu_d$ for $d=1,...,D$ denote those of $\bm{H}$. Based on Lemma \ref{lmm: kron eigs}, the eigenvalues of $-(\bm{L}\otimes\bm{H})$ are $\mu_d\lambda_n$ for $d=1,...D, n=1,...,N$. Since $\sigma\qty(\bm{H}) = \{e^{ i\theta}, e^{ -i\theta}\}$ and $\sigma\qty(\bm{H}) = \{1,e^{ i\theta},e^{-i\theta}\}$ for $D=2$ and $3$, respectively, the resulting eigenvalues are $\qty{e^{i\theta}\lambda_n, e^{-i\theta}\lambda_n}$ and $\qty{\lambda_n, e^{i\theta}\lambda_n, e^{-i\theta}\lambda_n}$ for $n=1,...,N$, for respective dimensions $D$. Thus, the eigenvalues are rotated in the complex plane by $\theta$ and $-\theta$ for $D=2$ and $\theta$, $-\theta$ and $0$ for $D=3$.
% These eigenvalues can be divided into $D$ groups of $N$, where each group has the $N$ eigenvalues from $-\bm{L}$ multiplied by $\mu_d$.
% For $D=2$, note that $\sigma\qty(\bm{H}) = \{e^{\pm i\theta}\}$. As such, the resulting eigenvalues are $e^{i\theta}\lambda_n$ and $e^{-i\theta}\lambda_n$ for the 2 groups of $N$ eigenvalues, which essentially indicates that the eigenvalues are rotated in the complex plane by $\pm \theta$.
% For $D=3$, since $\sigma\qty(\bm{H}) = \{1,e^{\pm i\theta}\}$, the resulting eigenvalues are $\lambda_n$, $e^{i\theta}\lambda_n$ and $e^{-i\theta}\lambda_n$, $n=1,...,N$ for the 3 groups of $N$ eigenvalues, which corresponds to rotations of $0$ and $\pm \theta$.
\end{proof}

\begin{lemmax}\label{lmm: alt improper rot}
(\textit{Rotation and mirroring of eigenvalues of improper rotations})
Let $\bm{H}$ be an improper rotation matrix, and $\bm{L}$ be a Laplacian (\ref{equ: evd Lap}), then there always exists positive eigenvalues for $-(\bm{I}_N\otimes \bm{H})(\bm{L}\otimes\bm{I}_D)$.
\end{lemmax}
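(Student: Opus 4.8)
The plan is to reuse the Kronecker machinery of Lemma~\ref{lmm: alt proper rot} verbatim, the only new ingredient being the spectrum of an improper rotation. First I would apply the mixed-product property to rewrite the matrix of interest as $-(\bm{I}_N\otimes\bm{H})(\bm{L}\otimes\bm{I}_D) = -(\bm{L}\otimes\bm{H}) = (-\bm{L})\otimes\bm{H}$, so that Lemma~\ref{lmm: kron eigs} applies directly. This identifies the entire spectrum as the set of products $\lambda_n\mu_d$, where $\lambda_n$ runs over the eigenvalues of $-\bm{L}$ and $\mu_d$ over those of $\bm{H}$. The statement then reduces to exhibiting one such product that is strictly positive.

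Next I would pin down the two relevant spectra. Since $\bm{L}$ is symmetric positive-semidefinite of rank $P\geq 1$ by~(\ref{equ: evd Lap}), the matrix $-\bm{L}$ possesses at least one strictly negative eigenvalue, say $-\lambda_1<0$. The crucial observation is that every improper rotation $\bm{H}$ carries $-1$ in its spectrum. This can be read off directly from Table~\ref{tab: rot mats}: for $D=2$ the matrix $\mathcal{R}(\theta)\bm{T}_i$ is symmetric with trace $0$ and determinant $-1$, hence has eigenvalues $\{+1,-1\}$; for $D=3$ the block structure gives $\sigma(\bm{H})=\{e^{i\theta},e^{-i\theta},-1\}$. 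More generally this follows from $\det\bm{H}=-1$ together with orthogonality, as the non-real eigenvalues occur in conjugate pairs of unit modulus whose products are $+1$, so the real eigenvalues $\pm 1$ must multiply to $-1$, forcing an odd (hence nonzero) number of $-1$'s.

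Finally I would combine the two facts. Selecting the strictly negative eigenvalue $-\lambda_1$ of $-\bm{L}$ and the guaranteed eigenvalue $\mu=-1$ of $\bm{H}$, the product $(-\lambda_1)(-1)=\lambda_1>0$ is, by the Kronecker spectrum above, an eigenvalue of $-(\bm{L}\otimes\bm{H})$. A strictly positive eigenvalue therefore always exists, independent of the rotation angle $\theta$ and the dimension $D$. I do not expect a genuine obstacle: the whole argument hinges on the single structural fact that an improper rotation has $-1$ as an eigenvalue, which is the point I would state most carefully, while everything else is an immediate consequence of the Kronecker eigenvalue lemma already used for the proper case.
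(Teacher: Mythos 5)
Your proposal is correct and follows essentially the same route as the paper: rewrite the matrix as $-(\bm{L}\otimes\bm{H})$, invoke Lemma~\ref{lmm: kron eigs} to get the product spectrum $\lambda_n\mu_d$, and use the fact that every improper rotation has $-1$ as an eigenvalue to mirror a negative eigenvalue of $-\bm{L}$ into the positive half-line. Your justification of the $-1$ eigenvalue (via symmetry/trace/determinant for $D=2$ and the general orthogonal-matrix argument) is more explicit than the paper's, which simply asserts the spectra, but the argument is the same.
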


\begin{proof}   
    We simplify the again $-(\bm{I}_N\otimes \bm{H})(\bm{L}\otimes\bm{I}_D) = -(\bm{L}\otimes\bm{H})$, whose eigenvalues are $\mu_d\lambda_n$ for $d=1,...D, n=1,...,N$. It is known that $\sigma\qty(\bm{H}) = \{-1, 1\}$ for $D=2$ and $\sigma\qty(\bm{H}) =\qty{-1, e^{i\theta}, e^{-i\theta}}$ for $D=3$ for an improper $\bm{H}$. Hence, there exists a set of eigenvalues of $-\bm{L}$ mirrored from the negative part to the positive part of the real axis by the $-1$ eigenvalue of $\bm{H}$.
\end{proof}

\bibliographystyle{IEEEtran}
\bibliography{refs}

\end{document}